\newtheorem{theorem}{Theorem}
\newtheorem{lemma}[theorem]{Lemma}
\newtheorem{proposition}[theorem]{Proposition}
\newtheorem{claim}[theorem]{Claim}
\newtheorem{ques}[theorem]{Question}
\newtheorem{conjecture}[theorem]{Conjecture}
\newcommand{\prob}[2]{\mathop{\mathrm{Pr}}_{#1}[#2]}
\newcommand{\defeq}{:=}
\newcommand{\sumv}{\Gamma}
\newcommand{\edges}{\mathbb{H}_{n}}
\newcommand{\bump}[2]{\text{\rm bump}_{#2}({#1})}
\newcommand{\swap}[2]{\text{swap}_{#2}({#1})}
\newcommand{\precset}[2]{S_{#2}(#1)}
\newcommand{\ceil}[1]{\lceil #1 \rceil}
\newcommand{\pisig}{\Pi_A(\sigma)}
\newcommand{\costpi}{c(\Pi)}
\newcommand{\e}{\text{\bf e}}
\begin{document}

\title{A Communication Game related to the Sensitivity Conjecture \footnotemark} 
\footnotetext{ A preliminary version of this paper appeared in
the Proceedings of the 6th Innovations in Theoretical Computer Science conference, 2015.}

\author{Justin Gilmer}
\address{Department of Mathematics,
Rutgers University,
Piscataway, NJ, USA.}
\email{jmgilmer@math.rutgers.edu}
\thanks{Supported by NSF  grant CCF 083727}
\author{Michal Kouck\'y}
\address{Computer Science Institute,
Charles University,
Prague, Czech Republic.}
\email{koucky@iuuk.mff.cuni.cz}
\thanks{The research leading to these results has received funding from the European Research Council under the European Union's Seventh Framework Programme (FP/2007-2013) / ERC Grant Agreement n. 616787. Partially supported by the project 14-10003S of GA \v{C}R.}
\author{Michael Saks}
\address{Department of Mathematics,
Rutgers University,
Piscataway, NJ, USA.}
\email{saks@math.rutgers.edu}
\thanks{Supported by NSF grants CCF-083727, CCF-1218711 and by  Simons Foundation award 332622.}

\maketitle

\begin{abstract}
One of the major outstanding foundational problems about boolean functions is the {\em sensitivity conjecture}, which
(in one of its many forms) 
asserts that the degree of a boolean function (i.e. the minimum degree of a real polynomial that interpolates the
function) is bounded above by some fixed power of its sensitivity (which is the maximum vertex degree of the graph
defined on the inputs where two inputs are adjacent if they differ in exactly one coordinate and their function
values are different).  We propose an attack on the sensitivity conjecture in terms of a novel two-player communication
game.  A lower bound of the form $n^{\Omega(1)}$ on the cost of this game would imply the sensitivity conjecture.

To investigate the problem of bounding the cost of the game, three natural (stronger) variants of the question are considered.  
For two of these variants,  protocols are presented that show that the hoped for lower bound does not hold.
These protocols satisfy a certain monotonicity property, and (in contrast to the situation for the two variants)
we show  that the cost of any monotone protocol satisfies a strong lower bound. 

There is an easy upper bound of $\sqrt{n}$ on the cost of the game.  We also improve slightly on this upper bound.
\end{abstract}

%\terms{Theory}

%\keywords{Sensitivity conjecture; degree of Boolean functions; sensitivity; decision trees; communication complexity}

\section{Introduction}

\subsection{A Communication Game}
The focus of this paper is a somewhat unusual cooperative two player communication game.
The game is parameterized by a positive integer $n$ and is denoted $G_n$.
Alice receives a permutation $\sigma=(\sigma_1,\ldots,\sigma_n)$
of $[n]=\{1,\ldots,n\}$ and a bit $b \in \{0,1\}$
and sends Bob a message  (which is restricted in a way that will be described momentarily).  Bob receives the
message from Alice and outputs a subset $J$ of $[n]$ that must include $\sigma_n$, the last
element of the permutation.   The cost to Alice and Bob is the size of the set $|J|$.

The message sent by Alice  is constrained as follows: Alice constructs
an array $\textbf{v}$ consisting of $n$ cells which we will refer to as {\em locations}, where each location $v_\ell$ is initially
empty, denoted by $v_\ell=*$. Alice gets the input as a data stream
$\sigma_1,\ldots,\sigma_n,b$ and is required to fill the cells of $\textbf{v}$ in the order specified by $\sigma$.
After receiving $\sigma_i$ for $i<n$, Alice
fills location $\sigma_i$ with  $0$ or $1$; once written this can not be changed.
Upon receiving $\sigma_n$ and $b$, Alice writes $b$ in location $\sigma_n$.
The message Alice sends to Bob is the completed array  in $\{0,1\}^n$.      

A protocol $\Pi$ is specified by Alice's algorithm for filling in the array, and Bob's function
mapping the received array to the set $J$.
The cost of a protocol $\costpi$ is the maximum of the output size $|J|$ over all
inputs $\sigma_1,\ldots,\sigma_n,b$.    

For example, consider the following protocol.  Let $k=\lceil \sqrt{n} \rceil$.  Alice and Bob fix a partition of
the locations of $\textbf{v}$ into $k$ blocks each of size at most $k$.
Alice fills $\textbf{v}$ as follows: When $\sigma_i$ arrives, if $\sigma_i$ is the last
location of its block to arrive then fill the entry with 1 otherwise fill it with 0.

Notice that if $b=1$ then the final array $\textbf{v}$ will have a single 1 in each block.  If $b=0$ then $\textbf{v}$
will have a unique all 0 block.

Bob chooses $J$ as follows: if there is an all 0 block, then $J$ is set to be that block, and otherwise
$J$ is set to be the set of locations containing 1's.  It is clear that $\sigma_n \in J$ and so
this is a valid protocol.  In all cases the size of $J$ will be at most $k$ and so the cost
of the protocol is $\lceil \sqrt{n} \rceil$. We will refer to this protocol as the AND-OR protocol. In Section \ref{sec:dec tree} we remark on this protocol's connection to the boolean function \[\text{AND-OR}(x) = \bigwedge\limits_{i = 1}^{\sqrt{n}}\bigvee\limits_{j = 1}^{\sqrt{n}} x_{ij}.\]

Let us define $C(n)$ to be the minimum cost of any protocol for $G_n$.
We are interested in the growth rate of $C(n)$ as a function of $n$.  In particular,
we propose:

\begin{ques}
\label{question:main}
Is there a $\delta>0$ such that $C(n) = \Omega(n^{\delta})$?
\end{ques}

\subsection{Connection to the Sensitivity Conjecture}

Why consider such a strange game?    The motivation is that the game provides a possible approach to the well known {\em
sensitivity conjecture} from boolean function complexity.  

Recall that the sensitivity of an $n$-variate boolean function $f$ at an input $\textbf{x}$, denoted $s_{\textbf{x}}(f)$, is the number of locations $\ell$
such that if we flip the bit of $\textbf{x}$ in location $\ell$ then the value of the function changes.  (Alternatively, this is the number
of neighbors of $\textbf{x}$ in the hamming graph whose $f$ value is different from $f(\textbf{x})$.)  The sensitivity of $f$,
$s(f)$,  is the maximum of $s_{\textbf{x}}(f)$ over all boolean inputs $\textbf{x}$.

The degree of a function $f$, $deg(f)$, is the smallest degree of a (real) polynomial $p$ in variables $x_1,\ldots,x_n$ that
agrees with $f$ on the boolean cube.  %It is easy to show and well known that $deg(f) \geq s(f)$.  

\begin{conjecture} (The Sensitivity Conjecture) There is a $\delta>0$ such that for any boolean function $f$,
 $s(f) \geq \Omega(deg(f)^{\delta})$.
\end{conjecture}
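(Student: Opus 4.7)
My plan is to reduce the sensitivity conjecture to the lower bound question on $C(n)$ introduced above. The strategy has two parts: (i) show that any boolean function $f$ yields a protocol for $G_n$ with $n$ close to $\deg(f)$ and cost $O(s(f))$; (ii) prove an unconditional lower bound $C(n) = \Omega(n^\delta)$. Combining the two gives $s(f) = \Omega(\deg(f)^\delta)$, which is precisely the conjecture.

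For step (i) the AND-OR example already in the text is the template: the function has degree $n$ and sensitivity $\sqrt{n}$, and matches the displayed protocol of cost $\lceil \sqrt{n} \rceil$. Generalizing, I would have Alice treat her input $(\sigma, b)$ as instructions for constructing a boolean vector $\mathbf{v}$ intended as an input to $f$. Alice fills $\mathbf{v}$ in the streaming order prescribed by $\sigma$, using a greedy rule designed so that after the final write $v_{\sigma_n} = b$ the value $f(\mathbf{v})$ is forced to be $b$. Bob, on receiving $\mathbf{v}$, outputs the set of coordinates at which $\mathbf{v}$ is $f$-sensitive; by construction $\sigma_n$ lies in this set, and its size is at most $s(f)$. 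The technical content of the reduction lies in showing that a greedy filling rule of this kind always succeeds, and this is where the degree of $f$ must enter the picture --- plausibly through a dual certificate such as a real polynomial representing or approximating $f$.

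For step (ii) I would begin with restricted models. Monotone protocols are advertised above as admitting a strong lower bound, and understanding the structural reason should suggest what additional features a general protocol can exploit. Absent monotonicity, the main difficulty is that Alice's filling is adaptive in the streaming order, so one expects the argument to go by constructing an adversarial distribution on $(\sigma, b)$ under which any protocol produces an array $\mathbf{v}$ consistent with many possible values of $\sigma_n$, forcing $|J|$ to be large.

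The main obstacle is unquestionably step (ii). The $\sqrt{n}$ AND-OR protocol already exhibits substantial flexibility on Alice's side, and the paper notes that natural strengthenings of the game admit cheap protocols, so any lower bound must pinpoint a feature of the actual game that these strengthenings fail to respect. In view of the long history of the sensitivity conjecture itself, even isolating such a feature --- let alone converting it into a quantitative bound --- seems likely to be the central technical hurdle, and it is reasonable to expect that the paper will make progress on restricted variants rather than on the full question.
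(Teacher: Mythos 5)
The statement you were asked to prove is the Sensitivity Conjecture itself, which is an open problem that the paper does not claim to resolve; it is stated as a \emph{conjecture}, not a theorem. Accordingly your ``proof'' is a research plan rather than a proof, and you are candid about this. Your step (i) is essentially the content of the paper's Proposition~\ref{prop:connection}: pass to a full-degree subfunction $g$ of $f$ on $\deg(f)$ variables, have Alice fill $\mathbf{v}$ so that the successive restrictions of $g$ retain full degree (Lemma~\ref{lem:restriction} guarantees this greedy step always succeeds), and note that after $n-1$ writes the one-variable restriction is non-constant, so the resulting edge is $g$-sensitive and Bob may safely output the $g$-sensitive coordinates of $\mathbf{v}$, a set of size at most $s(g)\le s(f)$. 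Two small inaccuracies in your sketch: the correct invariant is that each partial restriction has full degree (not that $f(\mathbf{v})$ is forced to equal $b$ --- $b$ is adversarial and Alice cannot control $f(\mathbf{v})$), and the reduction goes through a \emph{subfunction} on exactly $\deg(f)$ variables so that the game is played at parameter $n=\deg(f)$, not merely ``close to'' it.

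The genuine gap is exactly the one you name: step (ii), an unconditional lower bound $C(n)=\Omega(n^{\delta})$, is not established in the paper and remains open. The paper proves lower bounds only for the restricted classes of monotone protocols (Theorem~\ref{thm:monotone}, $\lfloor\sqrt n\rfloor$) and assignment-oblivious protocols (logarithmic), and simultaneously shows that several natural strengthenings of Question~\ref{question:main} (expected cost, conditional entropy, larger alphabets) are false, which rules out some candidate adversary arguments. It also gives upper bounds ($0.8\sqrt n$, and via Szegedy $O(n^{0.4732})$) showing that the $\sqrt n$ barrier for general protocols is not tight. So your proposal correctly reproduces the paper's reduction and correctly identifies the missing ingredient, but it does not supply a proof of the conjecture --- and neither does the paper.
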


An easy argument (given in Section \ref{sec:connection}) connects the cost function $C(n)$ of the game $G_n$ to the sensitivity conjecture:

\begin{proposition} \label{prop:connection}
For any boolean function on $n$ variables, $s(f) \geq C(deg(f))$. 
\end{proposition}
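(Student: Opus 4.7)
Given $f$ on $n$ variables with $\deg(f)=d$, the plan is to construct a protocol for $G_d$ whose cost is at most $s(f)$; this establishes $C(d)\leq s(f)$. The first step is a reduction to the ``full-degree'' setting. Choose $S\subseteq[n]$ with $|S|=d$ so that the monomial $\prod_{i\in S} x_i$ has nonzero coefficient in the unique multilinear representation of $f$, and let $g:\{0,1\}^d\to\{0,1\}$ be obtained from $f$ by fixing all variables outside $S$ to $0$. The top monomial survives, so $\deg(g)=d$, and restriction cannot increase sensitivity, so $s(g)\leq s(f)$. Identify $[d]$ with $S$ via a fixed bijection known to both players.

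Alice's strategy is greedy: at each step $i<d$, she chooses $r_i\in\{0,1\}$ so that the function obtained from $g$ by substituting $\sigma_j=r_j$ for all $j\leq i$ still has degree exactly $d-i$. This is always possible. Writing the current function as $g_i = x_{\sigma_i} u + v$ with $u,v$ independent of $\sigma_i$, we have $\deg(v)\leq d-i$ (since $v$ has only $d-i$ free variables), and since $\deg(g_i)=d-i+1$ we must have $\deg(u)=d-i$. If $\deg(v)=d-i$ take $r_i=0$; otherwise $\deg(u+v)=\deg(u)=d-i$, so take $r_i=1$. At step $d$ Alice writes $b$ into cell $\sigma_d$ as the rules require.

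Bob's rule is to output $J=\{\,i\in[d]:g(\mathbf{v}\oplus e_i)\neq g(\mathbf{v})\,\}$, the set of sensitive coordinates of $g$ at the received array $\mathbf{v}$. Just before Alice's last write, the restricted function depends only on $\sigma_d$ and has degree $1$, so its values at $\sigma_d=0$ and $\sigma_d=1$ differ; hence $\mathbf{v}$ is sensitive to coordinate $\sigma_d$ under $g$, giving $\sigma_d\in J$. Finally, $|J|=s_{\mathbf{v}}(g)\leq s(g)\leq s(f)$, so the protocol costs at most $s(f)$. The only non-routine ingredient is the degree-preservation invariant for Alice; the rest is direct from the definition of sensitivity.
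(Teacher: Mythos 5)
Your proof is correct and follows essentially the same route as the paper: restrict to a full-degree subfunction $g$ on $\deg(f)$ variables, have Alice greedily fix bits so that the remaining restriction of $g$ stays of full degree (your $u,v$ decomposition is exactly the paper's Lemma~\ref{lem:restriction}), and observe that the resulting edge is a sensitive edge of $g$, bounding Bob's output by $s(g)\leq s(f)$. The only cosmetic difference is that you describe Bob as outputting all sensitive coordinates of $g$ at $\mathbf{v}$, whereas the paper phrases it as $E(\Pi)\subseteq E(g)$ and invokes Bob's canonical best strategy; both yield the same bound.
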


In particular, an affirmative answer to Question \ref{question:main} would imply the sensitivity conjecture.

We note that Andy Drucker \cite{drucker} independently formulated the above communication game, and observed its connection
to the sensitivity conjecture.

\subsection{Background on the Sensitivity Conjecture}

Sensitivity and degree belong to
a large class of complexity measures for boolean functions
that seek to quantify, for each
function $f$, 
the amount of knowledge about individual variables needed to evaluate  $f$. 
Other such measures include decision tree complexity and
its randomized and quantum variants, certificate complexity,
and block sensitivity.    The value of such a  measure is at most the number
of variables.   
There is a long line of research aimed at bounding one such measure in terms of another.  For 
measures $a$ and $b$ let us write $a \leq_r b$ if there are constants $C_1$, $C_2$
such that for every total boolean function $f$, $a(f) \leq C_1 b(f)^r+C_2$. 
For example, the decision tree complexity of $f$, $D(f)$, is at least its degree $deg(f)$ and thus
$deg \leq_1 D$. It is also known \cite{midrijanis} that  $D \leq_3 deg$.
We say that
$a$ is {\em polynomially bounded} by $b$ if $a \leq_r b$ for some $r>0$ and that $a$ and $b$ are
{\em polynomially equivalent} if each  is polynomially bounded by the other. 

The measures mentioned above, with the notable
exception of sensitivity, are known to be polynomially equivalent. For example, Nisan and Szegedy \cite{nisanszegedy} proved $bs(f) \leq_2 deg(f)$, and also proved a result in the other direction, which was improved in ~\cite{beals2001quantum} to $deg(f) \leq_3 bs(f)$. 
 For a survey of such results, see \cite{buhrmandewolf} and 
\cite{senssurvey}; some recent results include \cite{abblss15,gss13}.

 The sensitivity conjecture, posed as a question by Nisan \cite{nisan89}
asserts that $s(f)$ is polynomially equivalent to at least one (and therefore all)  of the other measures mentioned.  There are many reformulations and related conjectures;
see \cite{senssurvey} for a survey. 

\iffalse
There are a number of equivalent formulations of the sensitivity conjecture. For instance \cite{gotsman1992equivalence} give a graph theoretic formulation by exploring a different relationship between sensitivity and degree than what is presented here. The same graph theoretic question also appeared somewhat earlier in \cite{chung1988induced}, however, sensitivity of boolean functions was only mentioned as a related problem and no direct connection was given. For a good survey of many other variations of the sensitivity conjecture, see \cite{senssurvey}. 
\fi

 The sensitivity conjecture perhaps more commonly appears as a question on the relationship between sensitivity and block sensitivity. For example, Nisan and Szegedy \cite{nisanszegedy} asked specifically if $bs(f) = O(s^2(f))$ for all functions, and as of this writing no counterexample has been given. The best known bound relating sensitivity to another measure was given by Kenyon and Kutin \cite{kenyon2004sensitivity}. They proved that $bs(f) \leq \frac{e}{2\pi}e^{s(f)}\sqrt{s(f)}$ for all boolean functions.%, where $bs(f)$ and $s(f)$ denote the block sensitivity and sensitivity of $f$ respectively. %Although this work proves no new bounds between sensitivity and a known measure, we remark that proving Theorem \ref{thm:oblivious} in a slightly more general setting would imply new bounds between sensitvity and degree.

%Like this work, \cite{gotsman1992equivalence} also use a relationship between sensitivity and degree to explore another approach to the conjecture. However, they prove a graph theoretic formulation of the conjecture, and the approach seems otherwise unrelated to this work. %Although this work shows no new bounds relating sensitivity to another known measure, we feel that the partial results in this work demonstrate the promise of the new approach.
\subsection{Outline of the Paper}
  In Section \ref{sec:connection} we prove that a positive answer to Question \ref{question:main} would imply the sensitivity conjecture. We show that adversary arguments for proving that boolean functions are evasive (that is have decision tree complexity $D(f) = n$) provide strategies for the communication game.  We also prove that it suffices to answer Question \ref{question:main} for the restricted class of  {\em order oblivious protocols}.
  
  In Section \ref{sec:prob} we present three stronger variants of Question \ref{question:main}. We exhibit protocols
that show that two of these variants have negative answers.  One might then expect that
variants of one of these protocols might lead to a negative answer to Question \ref{question:main}. 
 However,   we observe that these protocols
satisfy a property called monotonicity and 
 in Section \ref{sec:lowerbounds} we prove a $\lfloor \sqrt{n}\rfloor$ lower bound on the cost of any monotone
protocol. Thus a protocol that gives a negative answer to Question~\ref{question:main} must look quite different
from the two protocols that refuted the strengthenings.  We also prove a rather weak lower bound for a special class of protocols called
assignment oblivious protocols.   Finally, in Section \ref{sec:bestknown} we construct a protocol with  cost  $.8\sqrt{n}$, thus beating
the AND-OR protocol by a constant factor.   Let $r(k)=\log(C(k)/\log(k))$. After a preliminary version of our
paper appeared,
Szegedy \cite{szegedy15} showed that for any $k$,  $C(n) =O(n^{r(k)})$.
Our example shows that there is a $k$ for which $r(k)<1/2$ and so  it follows that 
$C(n) =O(n^{1/2-\delta})$ for some $\delta>0$.  Szegedy further showed
that $C(30)\leq 5$ which gives the best currently known upper bound $C(n) = O(n^{0.4732})$.

\section{Connection between the Sensitivity Conjecture and the Game}\label{sec:connection}
   In this section we prove Proposition \ref{prop:connection}, which connects the sensitivity conjecture with the two player game described in the introduction. 
   
   We use $\e_\ell$ to denote the assignment in $\{0,1\}^n$ that is 1 in location $\ell$ and 0 elsewhere. Given  $\textbf{v},\textbf{w} \in \{0,1\}^n$,  $\textbf{v} \oplus \textbf{w}$  denotes their bitwise mod-2 sum. 

Alice's strategy maps the permutation-bit pair $(\sigma,b)$
to a boolean array $\textbf{v}$ and Bob's strategy maps the array $\textbf{v}$ to a subset of $[n]$.  We now show
that for each strategy for Alice there is a canonical best strategy for Bob.    
For a permutation $\sigma$, $\Pi_A(\sigma)$  denotes 
the array Alice writes while receiving  $\sigma_1, \cdots, \sigma_{n-1}$ 
(so location $\sigma_n$ is  labeled  $*$). Thus $\pisig$ can be viewed as an 
edge in the {\em hamming graph} $\edges$ whose vertex set
is $\{0,1\}^n$, with two vertices adjacent if they differ in one coordinate.   
The {\em edge set} $E(\Pi)$ of a protocol $\Pi$ is the set of  edges $\pisig$ over all permutations $\sigma$.
This defines a subgraph of $\edges$.  Given Alice's output $\textbf{v}$, the possible values for $\sigma_n$
are precisely those locations $\ell$ that satisfy $(\textbf{v}, \textbf{v} \oplus \e_\ell)$ is an edge in $E(\Pi)$.  Thus the best
strategy for Bob is to output this set of locations.  It follows that 
$\costpi$ is equal to the maximum vertex degree of the graph $E(\Pi)$. 
   
Proposition~\ref{prop:connection} will therefore follow by showing the following: Given a boolean function 
with degree $n$ and sensitivity $s$, there is a strategy $\Pi$ for Alice for the game $G_n$ such that
the graph $E(\Pi)$ has maximum degree at most $s$.  

We need a few preliminaries.
A {\em subfunction} of a boolean function $f$ is a function $g$ obtained from $f$ by fixing some of the variables of $f$ to 0 or 1. For a subfunction $g$ of $f$, $s(f) \geq s(g)$. We say a function has {\em full degree} if $deg(f)$ is equal to the number of variables of $f$. We start by recalling some well known facts.
   \begin{lemma}
      For any boolean function $f$ there exists a subfunction $g$ on $deg(f)$ variables that has full degree. 
   \end{lemma}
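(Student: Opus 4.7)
The plan is to exploit the uniqueness of the multilinear polynomial representation of a boolean function and to produce the desired subfunction by setting unused variables to $0$.

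First I would recall that every boolean function $f:\{0,1\}^n\to\{0,1\}$ has a unique multilinear real polynomial representation
\[
 f(x_1,\ldots,x_n) \;=\; \sum_{S \subseteq [n]} c_S \prod_{i\in S} x_i,
\]
and that $\deg(f)$ equals $\max\{|S| : c_S \neq 0\}$. Let $d=\deg(f)$ and fix any set $S^*\subseteq[n]$ with $|S^*|=d$ and $c_{S^*}\neq 0$.

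Next I would consider the subfunction $g$ obtained from $f$ by fixing $x_i = 0$ for every $i \notin S^*$. This is a subfunction on the $d$ variables indexed by $S^*$. In the multilinear expansion above, any monomial $\prod_{i\in T} x_i$ with $T \not\subseteq S^*$ contains at least one variable outside $S^*$, hence vanishes after the substitution. The surviving monomials are precisely those with $T \subseteq S^*$, giving
\[
 g \;=\; \sum_{T \subseteq S^*} c_T \prod_{i\in T} x_i.
\]
This is a multilinear polynomial in $d$ variables, and by uniqueness it is the polynomial representation of $g$. Since $c_{S^*}\neq 0$, the top monomial $\prod_{i\in S^*}x_i$ survives, so $\deg(g)=d=|S^*|$, i.e., $g$ has full degree.

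I do not expect any serious obstacle here: the only thing one needs is the uniqueness of the multilinear representation (so that setting $x_i=0$ actually produces the multilinear representation of $g$, rather than merely one of many polynomials agreeing with $g$ on the cube). Once that is in hand, the construction of $S^*$ and the substitution are immediate, and the fact that $s(f)\geq s(g)$ for any subfunction (noted just before the lemma) does not even enter this particular lemma.
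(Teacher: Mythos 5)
Your proof is correct and takes essentially the same approach as the paper: take a maximum-degree monomial $\prod_{\ell\in S}x_\ell$ with nonzero coefficient in the unique multilinear representation, and set all variables outside $S$ to $0$. The paper just states this more tersely; your added detail about uniqueness and the surviving monomials is exactly the reasoning the paper leaves implicit.
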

   \begin{proof}
   If $p$ is the (unique) multilinear real polynomial that agrees with $f$ on the boolean cube, then $p$ contains a monomial $\prod\limits_{\ell \in S} x_\ell$ where $|S| = deg(f)$. Let $g$ be the function obtained by fixing the variables in $[n]\setminus S$ to 0. Then $g$ is a function on $deg(f)$ variables that has full degree.
   \end{proof}

   \begin{lemma} \label{lem:restriction}
   Given a function $f$ with full degree and a location $\ell$, there exists a bit $b$ such that the function obtained from $f$ by fixing $x_\ell = b$ is also of full degree. 
   \end{lemma}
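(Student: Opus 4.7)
The plan is to work with the unique multilinear real polynomial $p(x_1,\dots,x_n)$ representing $f$ on $\{0,1\}^n$. Since $f$ has full degree $n$ and $p$ is multilinear, the only possible monomial of degree $n$ is $\prod_{i=1}^n x_i$, so its coefficient $\alpha$ in $p$ must be nonzero.

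Next, I would decompose $p$ around the coordinate $\ell$. Write
\[
p(x) \;=\; x_\ell \cdot q(x_{-\ell}) \;+\; r(x_{-\ell}),
\]
where $q$ and $r$ are the multilinear polynomials in the remaining $n-1$ variables obtained from the standard multilinear decomposition, and $x_{-\ell}$ denotes the tuple of all variables except $x_\ell$. The coefficient of $\prod_{i\neq\ell} x_i$ in $q$ is exactly $\alpha$, since the only way to produce the monomial $\prod_{i=1}^n x_i$ in $p$ from this decomposition is via $x_\ell$ times $\prod_{i\neq\ell}x_i$ in $q$. Let $\beta$ denote the coefficient of $\prod_{i\neq\ell}x_i$ in $r$.

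The restriction of $f$ obtained by fixing $x_\ell = b$ is represented by the multilinear polynomial $b\cdot q + r$, so the coefficient of the top monomial $\prod_{i\neq\ell}x_i$ in this restriction is $b\alpha + \beta$. The key step is to split into two cases. If $\beta = 0$, then taking $b=1$ gives a top-monomial coefficient equal to $\alpha \neq 0$, so the restriction has full degree $n-1$. If $\beta \neq 0$, then taking $b = 0$ gives top-monomial coefficient $\beta \neq 0$, so again the restriction has full degree $n-1$. In either case some choice of $b \in \{0,1\}$ works.

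I do not expect any obstacle beyond this short case analysis; the only thing to be careful about is that when the coefficient of $\prod_{i\neq\ell} x_i$ happens to be zero under one choice of $b$, it must be nonzero under the other, and this is guaranteed precisely because $\alpha \neq 0$. The argument generalizes a standard fact about multilinear polynomials and will plug directly into the subsequent construction of Alice's strategy via repeated applications of this lemma along the permutation $\sigma$.
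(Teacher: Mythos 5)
Your proof is correct and takes essentially the same route as the paper: the decomposition $p = x_\ell q + r$ is the paper's $p = p_1 + x_\ell p_2$ (with $q = p_2$, $r = p_1$), and your case split on whether $\beta$ vanishes is exactly the paper's split on whether $p_1$ has a nonzero top coefficient, with the same choice of $b$ in each branch. Your explicit observation that the top coefficient of $q$ equals the top coefficient $\alpha$ of $p$ is a slightly cleaner way to phrase the paper's remark that ``$p_2$ must have a nonzero coefficient on $\prod_{k\neq\ell}x_k$ because $f$ has full degree,'' but the argument is the same.
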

   \begin{proof}
   The polynomial (viewed as a function from $\{0,1\}^n \rightarrow \{0,1\}$) for $f$ may be written in the form $p_1(x_1,x_2,\cdots,\cancel{x_\ell},\cdots, x_n) + x_\ell p_2(x_1,x_2,\cdots,\cancel{x_\ell},\cdots,x_n)$. Here $p_1(x_1,x_2,\cdots,\cancel{x_\ell},\cdots, x_n)$ indicates that the variable $x_\ell$ is not an input to the polynomial. If $p_1$ has a non zero coefficient on the monomial $\prod\limits_{k \neq \ell}x_k$, then we set $x_\ell = 0$ and the resulting function will have full degree. For the other case, note $p_2$ must have a non zero coefficient on $\prod\limits_{k \neq \ell}x_k$ because $f$ has full degree. Thus, setting $x_\ell = 1$ will work. 
   \end{proof}

 The proof of this lemma is essentially the same as the standard argument
that the decision tree complexity of any function $f$ is at least  $deg(f)$. \ 
   
   We are now ready to prove Proposition \ref{prop:connection}.
   \begin{proof}
   Given $f$, let $g$ be a subfunction on $deg(f)$ variables with full degree. We  construct a protocol $\Pi$ that satisfies $E(\Pi) \subseteq E(g)$, where $E(g)$ denotes the set of sensitive edges for the function $g$, i.e. the edges of $\edges$
whose endpoints are mapped to different values by $g$, which implies $\costpi \leq s(g) \leq s(f)$, and thus proves the proposition. As Alice receives  $\sigma_1,\sigma_2,\cdots,\sigma_n$, she fills in $\textbf{v}$  so that the restriction of $f$ to
each partial sucessive partial assignment  remains
a full degree function, which is possible by Lemma \ref{lem:restriction}. 
After Alice fills location $\sigma_{n-1}$, the function $g$ restricted to $\textbf{v}$ is a non-constant 
function of one variable, and so the edge $\pisig$ is a sensitive edge for $g$. This implies that $E(\Pi) \subseteq E(g)$.  %Thus if $H$ is the subgraph of the boolean cube which is the union of all edges $A(\sigma)$ for $\sigma \in S_{deg(f)}$, then every edge in $H$ is a sensitive edge for $g$ and $s(g) \geq \Delta(H)$. 
   
   %The adversary $A$ can be used to construct a protocol $\Pi$ for the two player game $G_{deg(f)}$. As Alice is being streamed a permutation $\sigma$ she answers according to the adversary $A$. When Bob receives the array $v$ he responds with $J$ which is the set if indices $i$ for which $(v, v \oplus i)$ is a sensitive edge for $g$. Since $v$ was arrived by following the adversary $A$, it follows that $(v, v \oplus \sigma_n)$ is a sensitive edge for $g$ and thus $\sigma_n \in J$. Thus this is a valid protocol. Note the cost of this protocol is at most $s(g)$, because $|J|$ is always the number of sensitive edges at a vertex $v$. This implies  $s(g) \geq C(deg(f))$.
   \end{proof}
The proof shows that a degree $n$ Boolean function
having sensitivity $s$ can be converted into a strategy
for Alice for the game $G_n$ of cost at most $s$.  We don't know whether
this connection goes the other way, i.e., we can't
rule out the possibility that the answer to Question~\ref{question:main} is negative (there
is a very low cost protocol for $G_n$) but
the sensitivity conjecture is still true. 

   \subsection{Connection to Decision Tree Complexity} \label{sec:dec tree}
   %We note the connection between protocols $\Pi$ for the game $G_n$ and boolean functions on $n$ variables for which $D(f) = n$ (sometimes referred to as {\em evasive} functions). A common method in the literature for showing that a function is evasive is to construct an adversarial strategy for responding to a data stream $\sigma_1,\sigma_2,\cdots,\sigma_n$, for which the sequence of partial assignments always avoids any certificates for the function $f$. The protocol given in the introduction is essentially the adversary strategy which may be used to show that the TRIBES function is evasive (if you switch the 0's and 1's in the protocol then it becomes exactly the adversary). Furthermore, it holds that a function $f$ is evasive if and only if there exists a protocol $\Pi$ for which $E(\Pi) \subseteq E(f)$. 
An $n$-variate boolean function is {\em evasive} if its decision tree complexity is $n$.   
A common method for proving evasiveness  is via an {\em adversary argument}.  View the problem of evaluating the function by a decision tree
as a game between the querier who wishes to evaluate the function and who decides which variable to read next, and the adversary who decides the value of the variable.   A function is evasive if
there is a strategy for the adversary that forces the querier to ask all $n$ quesitons.  For example, to prove that
   \[\text{AND-OR}(\textbf{x}) = \bigwedge\limits_{i = 1}^{\sqrt{n}} \bigvee\limits_{j = 1}^{\sqrt{n}} x_{ij}\]
is evasive,  the adversary can use the strategy: answer 0 to every variable unless the variable is the last variable in its $\bigvee$-block, in which case answer 1.
This adversary is exactly Alice's part of the AND-OR protocol described in the introduction. For more examples of adversary arguments see \cite{lovasz2002lecture}.
     
     Every evasive function $f$ by definition admits an adversary argument, and this corresponds to a protocol $\Pi$ for Alice.   
In fact a function $f$ is evasive if and only if there exists a protocol $\Pi$ for which $E(\Pi) \subseteq E(f)$ (recall $E(f)$ is the set of sensitive edges of the function $f$), and thus
the cost (size of the set chosen by Bob) is at most the sensitivity of $f$. 
This work explores whether we can use the structure of an arbitrary adversary (or protocol) to exhibit a lower bound on sensitivity.
   \subsection{Order Oblivious Protocols}

In the game $G_n$, at each step $i<n$,  the value written by Alice at location $\sigma_i$ may depend on her knowledge up
to that step, which includes both the sequence
$\sigma_1,\cdots,\sigma_{i}$ and the partial assignment already made to $v$ at locations
$\sigma_1,\ldots,\sigma_{i-1}$. 
A natural way to restrict Alice's strategy is to require that the bit she writes in location $\sigma_i$ depends only
on $\sigma_i$ and the current partial assignment to $v$ but not on the order in which $\sigma_1,\ldots,\sigma_{i-1}$
arrived. A protocol satisfying this restriction is said to be {\em order oblivious}. The following easy proposition shows that it suffices to answer Question \ref{question:main} for order oblivious protocols.

\begin{proposition} \label{prop:orderoblivious}
   Given any protocol $\Pi$ there exists an order oblivious protocol $\Pi'$ such that $E(\Pi') \subseteq E(\Pi)$. In particular, $c(\Pi') \leq c(\Pi)$.
\end{proposition}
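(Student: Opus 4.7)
The plan is to construct $\Pi'$ from $\Pi$ by simulating $\Pi$ on a \emph{canonical} realizing execution of the current partial assignment, so that the next bit written depends only on the partial assignment and the next location. Call a partial assignment $p$ on a subset $S \subseteq [n]$ \emph{reachable by $\Pi$} if some ordering of $S$ fed to $\Pi$ produces $p$; for each such $p$ fix a canonical ordering $\tau(p)$ of $S$ that produces $p$ (for example, the lexicographically smallest such ordering). For each reachable $p$ and each $\ell \notin S$, define $\Pi'$ so that upon receiving $\ell$ in the state $p$ it writes the bit that $\Pi$ would write at location $\ell$ when run on the input stream $\tau(p)$ followed by $\ell$. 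Because this bit depends only on $p$ and $\ell$, the protocol $\Pi'$ is order oblivious by construction.

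Next I would verify, by induction on the number of arrivals, that every partial assignment produced during an execution of $\Pi'$ is reachable by $\Pi$. The base case is the empty partial assignment. For the inductive step, if $\Pi'$ has so far produced a reachable partial assignment $p$ and now receives $\sigma_i$, then the bit $\Pi'$ writes at $\sigma_i$ matches, by definition, the bit $\Pi$ writes at $\sigma_i$ on input $\tau(p)$ followed by $\sigma_i$; hence the new partial assignment coincides with the one $\Pi$ produces on that input, and is therefore reachable.

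To conclude, for any permutation $\sigma$ let $p^*$ denote the partial assignment on $[n] \setminus \{\sigma_n\}$ that $\Pi'$ produces after processing $\sigma_1, \ldots, \sigma_{n-1}$. By the claim above, $p^*$ is reachable by $\Pi$ via $\tau(p^*)$, so the edge $\Pi'_A(\sigma)$ in the hamming graph equals $\Pi_A(\tau(p^*) \cdot \sigma_n)$, which lies in $E(\Pi)$. This gives $E(\Pi') \subseteq E(\Pi)$, and the inequality $c(\Pi') \leq c(\Pi)$ then follows from the characterization of cost as the maximum vertex degree of the edge graph noted earlier in the excerpt.

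The only conceptual point is that $\Pi$ may write different bits at the same location $\ell$ depending on the order in which the earlier inputs arrived, so to construct an order oblivious replacement we must commit to a single realizing history for each reachable partial assignment. The lex-first choice of $\tau(p)$ is arbitrary but convenient, and I do not expect a real obstacle beyond setting the definitions up carefully and running the straightforward induction.
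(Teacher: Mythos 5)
Your proof is correct and takes essentially the same approach as the paper: both fix a canonical (lexicographically smallest) realizing permutation for each reachable partial assignment, simulate $\Pi$ on that canonical history plus the new arrival, and verify by induction that every state reached by $\Pi'$ remains reachable by $\Pi$, whence $E(\Pi') \subseteq E(\Pi)$.
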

\begin{proof}
  First some notation. Given a permutation $\sigma$ let $\sigma_{\leq k}$ denote the prefix of the first $k$ elements of $\sigma$. We let $\Pi_A(\sigma_{\leq k})$ denote the partial assignment written on $\textbf{v}$ after Alice has been streamed $\sigma_1,\cdots,\sigma_k$.
  
Given $\Pi$ we define an order oblivious protocol $\Pi'$ of cost at most that of $\Pi$.  We define $\Pi'$ in steps, (where in step $i$ Alice receives $\sigma_i$ and writes a bit in that location).  
Given $k \geq 0$ we assume that $\Pi'$ has been defined up through step $k$ and has the property
that for every permutation $\sigma$, there is a permutation $\tau$ of $\sigma_1,\cdots,\sigma_k$ so that $\Pi_A(\tau) = \Pi_A'(\sigma_{\leq k})$.
  
  Suppose $\sigma_{k+1}$ arrives and the current state of the array is $\textbf{v} \defeq \Pi'(\sigma_{\leq k})$.  From $\textbf{v}$ Alice can deduce the set $\{\sigma_1,\ldots,\sigma_k\}$  (the set of locations not labeled  *). Alice then considers all permutations $\tau$ of $\sigma_1,\cdots,\sigma_k$ such that $\Pi_A(\tau) = \Pi_A'(\sigma_{\leq k})$ and picks the lexicographically smallest permutation (call it $\tau^*$) in that set and writes on location $\sigma_{k+1}$ according to what $\Pi$ does after $\tau^*$. Note that the bit written on location $\sigma_{k+1}$ does not depend on the relative order of $\sigma_1,\sigma_2,\cdots,\sigma_k$. 
%Using this strategy, Alice maintains the invariant that for every permutation $\sigma$, there is a permutation $\tau$ of $\sigma_1,\cdots,\sigma_k$ so that $\Pi(\tau) = \Pi'(\sigma_{\leq k})$.
   
By construction, $\Pi'$ is order oblivious. Also for any permutation $\sigma$ there is a permutation $\tau$ for which $\Pi_A(\tau) = \Pi_A'(\sigma)$. This implies that $E(\Pi') \subseteq E(\Pi)$.
   \end{proof}
   
\section{Stronger Variants of\\
  Question~1}%\ref{question:main}}
\label{sec:prob}

We now present three natural variants of Question~\ref{question:main}, and refute two of them by exhibiting and analyzing some specific protocols.

The cost function $\costpi$ of a protocol is the worst case cost over all choices of $\sigma_1,\ldots,\sigma_n,b$.  Alternatively, we can consider
the average size (with respet to random $\sigma$ and $b$) of the set Bob outputs.  We call this the {\em expected cost of $\Pi$} and denote it by $\tilde{c}(\Pi)$.  Let $\tilde{C}(n)$ denote the minimum
expected cost of a protocol for $G_n$.  

\begin{ques}
\label{question:average}
Is there a $\delta>0$ such that $\tilde{C}(n) = \Omega(n^{\delta})$?
\end{ques}

An affirmative answer to this question would give an affirmative answer to Question \ref{question:main}.

It is well known that the natural probabilistic version of the sensitivity conjecture, where sensitivity is replaced
by average sensitivity (wiith respect to the uniform distribution over $\{0,1\}^n$) is trivially false (for example, for the
OR function).  However, there is apparently no connection between average sensitivity and average protocol cost.   For example,  the protocol induced by the decision tree adversary for OR has Alice write a 0 at each step.   Note  that $E(\Pi)$ is  exactly the set of sensitive edges for the OR function. However, the average cost $\tilde{c}(\Pi)$ is $n/2$ whereas the average sensitivity of the OR function is $o(1)$.     

We also remark that an analog of Proposition~\ref{prop:orderoblivious} holds for the cost function $\tilde{c}(\Pi)$, and therefore
it suffices to answer the question for order oblivious protocols.  (The proof of the analog is similar to the proof of Proposition~\ref{prop:orderoblivious}, except when modifying the protocol  $\tau^*$ is not selected to be the lexicographically smallest permutation
in the indicated set, but rather the permutation in the indicated set that minimizes the expected cost conditioned on
the first $k$ steps. )

There is another natural variant of Question~\ref{question:main} based on average case.  When we run a fixed protocol
$\Pi$ on  a random permutation $\sigma$ and bit $b$, we can view the array $\textbf{v}$ produced
by Alice as a random variable. Let $\tilde{h}(\Pi)$ be the conditional entropy
of $\sigma_n$ given $\textbf{v}$; intuitively this measures the average number of bits of uncertainty that Bob
has about $\sigma_n$  after seeing $\textbf{v}$.  It is easy to show that this is bounded above
by $\log(\costpi)$.  
Let $\tilde{H}(n)$ be the minimum
of $\tilde{h}(\Pi)$ over all protocols $\Pi$ for $G_n$.
The analog  of Question \ref{question:main} is whether there is a positive constant $\delta$ such that
$\tilde{H}(n) = \Omega(\delta \log(n))$.   An affirmative answer to this would have implied an affirmative
answer to Question~\ref{question:main}, but  the answer to this new question turns out to be  negative.

\begin{theorem} \label{thm:construction} 
There is an order oblivious protocol $\Pi$ for $G_n$ such that $\tilde{h}(\Pi)=3+\lceil \log\log(n) \rceil$.
\end{theorem}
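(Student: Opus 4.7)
The plan is to construct an explicit order-oblivious protocol $\Pi$ and verify that $\tilde{h}(\Pi) \leq 3 + \lceil \log\log n \rceil$. Since $\tilde{h}(\Pi) \leq \avg{}{\log|J(v)|}$, where $J(v)$ is Bob's output set as determined in Section~\ref{sec:connection}, it suffices to design $\Pi$ so that $|J(v)| \leq 8\log n$ either always, or with failure probability small enough that the rare ``bad'' events contribute only $O(1)$ to the expectation.

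The construction I would propose partitions $[n]$ into blocks of logarithmic size and has Alice use two layers of rules. The first is an AND-OR-style rule within each block: at step $i<n$, write $1$ iff $\sigma_i$ is the last of its block to arrive (this is manifestly order-oblivious since whether $\sigma_i$ completes its block depends only on the set $\{\sigma_1,\ldots,\sigma_{i-1}\}$ and $\sigma_i$). The second layer exploits the observation that at step $n-k$, Alice knows $\sigma_n$ lies in a specific set of size $k$; in particular at step $n-1$ she knows $\sigma_n$ exactly. I would designate certain ``late-stream'' positions as carrying signal bits that partially encode the identity of $\sigma_n$'s block, chosen so that the bits written still depend only on the set of past arrivals and the current position.

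Given the protocol, Bob's candidate set $J(v)$ is the set of $\ell$ for which $(v, v\oplus\e_\ell)$ lies in $E(\Pi)$. I would analyze $|J(v)|$ by decomposing $H(\sigma_n \mid v) = H(B(\sigma_n)\mid v) + H(\sigma_n\mid B(\sigma_n), v)$, where $B(\sigma_n)$ is the block of $\sigma_n$. The block-level term is controlled by the late-stream signal bits combined with the AND-OR pattern: in the case that the ``signal'' is successfully decoded, Bob knows $B(\sigma_n)$ outright, contributing $0$; in the rare failure case (of probability $O(1/\log n)$), Bob has all blocks as candidates, contributing $O(1)$ to the expected entropy. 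The in-block term is at most $\log|B|$ when the block is known, which is $O(\log\log n)$ with the chosen block size, and the AND-OR pattern within $B$ further constrains the candidates.

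The main obstacle is the first layer by itself: as one can verify, any pure AND-OR variant yields $\tilde{h}(\Pi) = \tfrac{1}{2}\log n$ regardless of block size, because the two cases $b=0$ and $b=1$ balance out with total entropy $\tfrac{1}{2}(\log|B| + \log(n/|B|))$. Beating this requires the second layer to deliver genuinely new information in the $b=1$ case (where AND-OR alone gives Bob $n/|B|$ block candidates), and designing the late-stream signal bits so that (i)~they remain a function only of the past-arrival set, (ii)~they fit inside the AND-OR bit budget without destroying the within-block decoding, and (iii)~the rare events where the signaling fails occur with probability at most $O(1/\log n)$, keeping their contribution to $\avg{}{\log|J(v)|}$ bounded by $O(1)$. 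Balancing these requirements is where the bulk of the technical work lies.
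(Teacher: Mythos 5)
Your proposal sketches a genuinely different route than the paper's, but it is not a proof: the crucial ``second layer'' of signal bits is never actually specified, and you yourself concede that reconciling requirements (i)--(iii) is ``where the bulk of the technical work lies.'' The obstacle you identify is real --- any pure block (AND-OR) rule gives roughly $\tfrac12\log n$ conditional entropy --- but the proposal stops exactly where the difficulty begins: there is no concrete mechanism for encoding $B(\sigma_n)$ into the written bits without breaking the within-block decoding, and no verification that the resulting scheme is simultaneously order-oblivious and decodable by Bob. As written, one cannot check correctness or even confirm that the approach gets off the ground.

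The paper's construction avoids blocks and extra signaling layers entirely and is worth contrasting. It identifies each location $\ell$ with its binary expansion $\textbf{b}(\ell)\in\mathbb{F}_2^k$ ($k=\lceil\log n\rceil$), and for an array $\textbf{v}$ defines $\Gamma(\textbf{v})$ as the XOR of $\textbf{b}(\ell)$ over the $1$-entries. Alice writes $0$ for the first $n-t$ arrivals ($t=\lceil\log^2 n\rceil$), and for the remaining $t$ arrivals she fills cells according to the canonical minimum-weight completion of the current partial array having $\Gamma=0^k$; this completion uses at most $k$ ones, and one checks directly that the resulting strategy is order-oblivious. The payoff is a parity identity: when $b=0$ the final array satisfies $\Gamma(\textbf{v})=\textbf{b}(\sigma_n)$, so Bob recovers $\sigma_n$ \emph{exactly} (contributing zero conditional entropy), and when $b=1$ the candidate set for $\sigma_n$ has size at most $t$ (indeed at most $k$), contributing $O(\log\log n)$. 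The only ``rare bad event'' is $\sigma_n$ landing among the few $1$-positions of the canonical completion, which happens with probability at most $1/\log n$ and contributes $O(1)$. In short, the paper gets the needed new information in the $b=1$ case essentially for free from the Hamming-code structure, rather than trying to squeeze signal bits into an AND-OR budget. To turn your outline into a proof you would need, at minimum, a concrete rule for the late-stream positions and a decoding argument, and the most natural way to supply those is precisely the parity idea the paper uses.
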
  
{\bf Remark:} It might seem that this could be proved  by giving a protocol $\Pi$ that is not order
oblivious and converting it  into an order oblivious protocol  as described earlier..
However, while we know that this can be done without increasing worst case cost or  average cost,
it is possible that $\tilde{h}$ may inrease. Therefore, we  construct the desired order oblivious protocol directly.
\begin{proof}
 Let $k=\lceil \log(n) \rceil$ and associate each location $\ell \in [n]$ to
its binary expansion, viewed as a vector $\textbf{b}(\ell)  \in \mathbb{F}_2^k$. Note that $0 \notin [n]$, and thus each vector $\textbf{b}(\ell)$ is nonzero. 
For an array $\textbf{v} \in \{0,1\}^n$ we define $\sumv(\textbf{v})$ to be $\sum_{i=1}^n \textbf{b}(\ell)$, i.e. the vector in $\mathbb{F}_2^k$
obtained by summing the vectors
corresponding to the 1 entries of $\textbf{v}$.  Say that an array $\textbf{v} \in \{0,1,*\}^n$ is {\em admissible} if there is a way of
filling in the *'s (a {\em completion}) so that for the resulting array $\textbf{w}$ we have $\sumv(\textbf{w})=0^k$, where $0^k$ is the all 0 vector in $\mathbb{F}_2^k$.
For an admissible array $\textbf{v}$, let $\hat{\textbf{v}}$ be the unique completion of $\textbf{v}$ such that (1) $\sumv(\hat{\textbf{v}}) = 0^k$,
(2) The number $r$  of 0's in $\hat{\textbf{v}}$ is minimum, (3) the ordered sequence $\ell_1<\cdots < \ell_r$
of locations of the 0's in $\hat{\textbf{v}}$ is lexicographically minimum subject to 
conditions (1) and (2), i.e., for each $j \in [r]$, $\ell_j$ is minimum possible given $\ell_1,\ldots,\ell_{j-1}$.

We now describe the protocol.
 Let $t>k$ be an integer (which
we'll choose to be $\lceil \log^2(n) \rceil$).  Alice says 0 for the first $n-t$ steps.  The resulting array $\textbf{u}$ has $n-t$ 0's
and $t$ $*$'s.  Since $\textbf{u}$ can be completed to the all 0 array,  $\textbf{u}$ is admissible.  Furthermore, among the $*$ positions
there must be a set of at most $k$ vectors that sum to $0^k$, so $\hat{\textbf{u}}$ has at most $k$ 1's. 
Alice fills in the remaining positions to agree with $\hat{\textbf{u}}$.
This strategy is order oblivious:   a simple induction shows that
for each array $\textbf{w}$ reached under the above strategy,
 $\textbf{w}$ is admissible and
$\hat{\textbf{w}}=\hat{\textbf{u}}$, so Alice's strategy is equivalent to
filling position $\sigma_k$ (for $k \geq n-t$)  according to $\hat{\textbf{w}}$ where $\textbf{w}$ is the array after $k-1$ steps. This
is clearly an order oblivious strategy

Let $\textbf{v}$ denote the array in $\{0,1\}^n$ received by Bob.
We now obtain an upper bound on the conditional entropy of $\sigma_n$ given $\textbf{v}$.  Let $\textbf{u}=\textbf{u}(\sigma)$
be the array obtained after the first $n-t$ steps and let $T(\sigma)$ be the set of positions of *'s in $\textbf{u}$.
Let $S(\sigma)$ be the subset of $T(\sigma)$ consisting of those positions set to 0 in $\hat{\textbf{u}}$.
Let $L$ be the random variable that is 1 if $\sigma_n \in S(\sigma)$ and 0 otherwise.  Since $S(\sigma)$ depends
only on the set $T(\sigma)$ and not on the order of the last $t$ locations, the probability
that $L=1$ is $|S(\sigma)|/|T(\sigma)| \leq \log(n)/\log^2(n) = 1/\log(n)$.
We have:

\begin{eqnarray*}
H(\sigma_n|\textbf{v}) & \leq & H(\sigma_n,L|\textbf{v}) \\
& = & H(L|\textbf{v})+H(\sigma_n|\textbf{v},L) \\
& \leq & 1 + H(\sigma_n|\textbf{v},L) \\
&=& 1+ H(\sigma_n|\textbf{v},L=1)\prob{}{L=1} \\
 & & +H(\sigma_n|\textbf{v},L=0)\prob{}{L=0}\\
& \leq & 1 + H(\sigma_n)\frac{1}{\log(n)} + H(\sigma_n|\textbf{v},L=0)
\end{eqnarray*}

We bound the final expression. $H(\sigma) = \log(n)$ so the second term is  1.
For the third term, we condition further on the value of the final bit $b$:

\begin{eqnarray*}
 H(\sigma_n|\textbf{v},L=0) & \leq & H(b) + \frac{1}{2} (H(\sigma_n|\textbf{v},L=0,b=1) + H(\sigma_n|\textbf{v},L=0,b=0))\\
\end{eqnarray*}
Of course, $H(b)=1$.  
Given $L=0$, we have $\sigma_n \in T(\sigma)-S(\sigma)$.   If $b=1$, then $\sigma_n$ is one of at most
$t$ positions set to 1, and so the conditional entropy of $\sigma_n$ is at most $\log(t) = 2\lceil \log\log(n) \rceil$.
If $b=0$ then  $\sumv(\textbf{v})=\sigma_n$ (since $\sigma_n$ is the unique location that if set to 1 would
make the vectors corresponding to the locations of 1's   sum to $0^k$).  The conditional entropy in this case is 0.

Summing up all of the conditional entropy contributions gives $3+\lceil \log\log(n) \rceil$. 

\end{proof}

 For our last variant, suppose Alice can communicate to Bob with a $w$-ary alphabet instead of a binary alphabet.  Thus, Alice is streamed a permutation $\sigma$, and when $\sigma_i$ arrives she may write any of the symbols $\{1,\ldots,w\}$ on location $\sigma_i$ in $\textbf{v}$. 
At the last step $b \in \{1,\ldots,w\}$ arrives and Alice must write it in location $\sigma_n$. Bob sees $\textbf{v}$ and has to output a set $J$ 
that contains $\sigma_n$. The cost of the protocol is the maximum size of $J$ over all $\sigma$ and $b$.

We will show that Question \ref{question:main} is false in this setting.
To state our result we need some definitions.
Fix $r>1$ and positive integer $k_0$.  For $n \geq k_0$ define for each integer $j \geq 0$
the function $t_j$ defined on integers $n \geq k_0$.  The function $t_0$
is given by $t_0(n)=n$ for all $n$.  For $j \geq 1$, $t_j$ is defined inductively
$t_j(n)=\max(k_0,\lceil \log_r (t_{j-1}(n)) \rceil)$.  Observe that for $j \geq 2$ we have $t_j(n)=t_{j-1}(t_1(n))=t_1(t_{j-1}(n))$.
Thus $t_j$ depends on parameters $r$ and $k_0$ and is a minor variant of the base $r$ iterated log function.

\begin{theorem}
\label{thm:ary}
For each $j \geq 0$ there is a protocol $\Pi_j$ using the alphabet $\{1,\ldots, 2j+1\}$
that has cost at most $t_j(n)$, where  the parameters needed to define $t_j$ 
are $r=2^{1/4}$ and some sufficiently large $k_0$.
\end{theorem}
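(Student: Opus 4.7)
The plan is to prove Theorem \ref{thm:ary} by induction on $j$, with the base case $j=0$ being immediate: with alphabet $\{1\}$ Alice has no choice but to write 1 everywhere, Bob outputs $J = [n]$, and the cost is $n = t_0(n)$. For the inductive step, I would prove a key ``boosting'' lemma: given any protocol $\Pi$ for $G_n$ using alphabet $\{1, \ldots, w\}$ with worst-case cost $C$, one can construct a protocol $\Pi^+$ for $G_n$ using alphabet $\{1, \ldots, w+2\}$ with worst-case cost at most $\lceil \log_r C \rceil$, valid whenever $C$ exceeds the threshold set by $k_0$. Iterating this lemma $j$ times starting from $\Pi_0$ (which has cost $n = t_0(n)$) produces the desired $\Pi_j$ with cost $t_j(n)$, since by assumption $t_j(n) = \lceil \log_r t_{j-1}(n) \rceil$ in the non-boundary regime.

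The core technical content is the construction of $\Pi^+$ from $\Pi$. The two new symbols $w{+}1$ and $w{+}2$ are used by Alice to mark a small set of carefully chosen positions in a way that lets Bob narrow $\Pi$'s candidate set of size $\leq C$ down to a refined set of size $\lceil \log_r C \rceil$. To accommodate the streaming constraint, the marking scheme must be adaptive: Alice selects positions to mark and chooses between the two new symbols based on the evolving state of $\Pi$'s execution and the arrival pattern of $\sigma$, without prior knowledge of $\sigma_n$. A natural implementation is to reserve $\lceil \log_r C \rceil$ ``slots'' among the positions---each slot corresponding to one bit of information that distinguishes elements of $\Pi$'s output set---and have Alice assign the last-arriving position within each slot to a new symbol whose value encodes the relevant bit. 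Bob's strategy is then to read off the marked positions, combine this with $\Pi$'s own decoding, and output the resulting refined candidate set.

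The main obstacle will be making the marking-and-encoding scheme fully consistent across all permutations and across all values of $b$, and rigorously establishing the cost bound of $\lceil \log_r C \rceil$ with the specific $r = 2^{1/4}$. The factor $r = 2^{1/4}$ (so that $\log_r x = 4 \log_2 x$) suggests a budget analysis in which roughly one quarter of the ``signalling capacity'' is consumed by the overhead of simultaneously running $\Pi$ alongside the new marks, and quantifying this overhead precisely is where the induction requires careful bookkeeping. Finally, the boundary case when $t_{j-1}(n) \leq r^{k_0}$ is handled by the $\max(k_0, \cdot)$ clause in the definition of $t_j$: in that regime we simply take $\Pi_j = \Pi_{j-1}$ (embedding the smaller alphabet into the larger one), since $t_j(n) = k_0 \geq t_{j-1}(n)$ and no further reduction is needed.
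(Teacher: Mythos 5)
Your inductive skeleton matches the paper's at a high level: base case $j=0$ is the trivial protocol, and two new symbols per level give a logarithmic reduction. But the heart of the argument is the construction of $\Pi_j$ from $\Pi_{j-1}$, and there your proposal both misstates what is being reduced and, more importantly, leaves out the technical devices that make the reduction possible.

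First, the paper does not boost ``cost $C$ to cost $\lceil \log_r C\rceil$.'' Instead, $\Pi_j$ spends one new symbol ($2j{+}1$) on a Phase 0 that marks $n-t_1(n)$ positions and then runs (a modification of) $\Pi_{j-1}$ on the remaining $t_1(n)=\lceil\log_r n\rceil$ positions, so what shrinks is the \emph{universe size}, and the bound $t_{j-1}(t_1(n))=t_j(n)$ is what closes the induction. Second---and this is the real gap---your ``slots'' scheme, where the last-arriving element of each slot carries a bit ``distinguishing elements of $\Pi$'s output set,'' runs into the temporal obstruction you yourself hint at but do not resolve: the bit assigned to a slot must be fixed when the last element of that slot arrives, which (for all but one slot) happens before Alice sees $\sigma_n$, so she cannot encode anything about where $\sigma_n$ sits in $\Pi$'s output set. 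The paper circumvents this by having Alice encode something she does already know before the last step, namely the set $S$ of positions left unfilled after Phase 0, rather than anything about $\sigma_n$ directly. Concretely, she writes $g(\chi_n(S))$ using the second new symbol, where $\chi_n$ is a coloring of subsets of $[n]$ with the property that sets of symmetric difference at most $2$ get distinct colors (Proposition~\ref{symdif}), and $g$ maps into a code $C_t\subseteq\{0,1\}^t$ robust to $\lceil 4\log_2 t\rceil$ \emph{deletions} (Proposition~\ref{deletion code}). Deletion-robustness is essential because the positions of $S$ claimed by later phases of $\Pi_{j-1}$ disappear from Bob's view of this codeword as deletions, not as erasures at known locations. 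Bob then recovers $\chi_n(S)$; since his view gives him $S'=S\setminus\{\sigma_n\}$, and any two candidate $S$'s would differ in symmetric difference $2$, the coloring pins down $S$ and hence $\sigma_n$. None of this machinery is present in your sketch, the constant $r=2^{1/4}$ is exactly calibrated to the size of the deletion code, and the ``careful bookkeeping'' you defer is precisely where the proof lives. So while your outline is a sensible starting point, as written it has a substantive gap at the one step that matters.
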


For example, for a ternary alphabet the cost of the protocol is $O(\log(n))$ and for a 5-ary
alphabet the cost is $O(\log\log(n))$.
To prove this,
we'll need a few elementary standard facts about error correcting codes.  We include proofs to make the
presentation self-contained.  

\begin{proposition}
\label{symdif}
For each $n \geq 2$ there is  a coloring $\chi_n$ of the subsets of $[n]$ by the set $[n^2]$
such that any two sets that have symmetric difference at most 2 get  different colors.
\end{proposition}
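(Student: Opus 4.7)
The plan is to view subsets of $[n]$ as vectors in $\mathbb{F}_2^n$ (characteristic vectors), so that the symmetric difference $A \triangle B$ becomes the Hamming distance between the corresponding vectors. The condition we need is that any two distinct vectors at Hamming distance at most $2$ receive different colors, i.e., each color class, as a code in $\mathbb{F}_2^n$, has minimum distance at least $3$. The standard object that produces this is a parity-check matrix whose columns are nonzero and pairwise distinct.

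Concretely, I would set $m = \lceil \log_2(n+1) \rceil$ and choose $n$ distinct nonzero vectors $h_1, \ldots, h_n \in \mathbb{F}_2^m$; this is possible because $2^m - 1 \geq n$. I would then define
\[
\chi_n(S) \defeq \sum_{i \in S} h_i \in \mathbb{F}_2^m,
\]
identifying $\mathbb{F}_2^m$ with $\{0,1,\ldots,2^m-1\}$ via binary expansion. The total number of colors used is $2^m \leq 2(n+1)$, which is at most $n^2$ for all $n \geq 2$, so the color range fits inside $[n^2]$.

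To verify the symmetric-difference property, take distinct sets $A, B \subseteq [n]$ with $|A \triangle B| \leq 2$, and let $D = A \triangle B \neq \emptyset$. In $\mathbb{F}_2$ arithmetic the characteristic vectors satisfy $\mathbf{1}_A + \mathbf{1}_B = \mathbf{1}_D$, so
\[
\chi_n(A) - \chi_n(B) = \sum_{i \in D} h_i.
\]
If $|D|=1$ this sum is a single column $h_j$, nonzero by construction; if $|D|=2$ it is the $\mathbb{F}_2$-sum of two distinct columns, again nonzero since the columns were chosen distinct. Hence $\chi_n(A) \neq \chi_n(B)$.

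There is no real obstacle here beyond the accounting: the whole construction is the classical observation that a binary code of minimum distance $3$ is equivalent to a parity-check matrix with distinct nonzero columns, and the only thing to check is that $m$ bits of color suffice to land inside $[n^2]$, which reduces to the inequality $2n+2 \leq n^2$ for $n \geq 2$.
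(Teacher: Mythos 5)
Your proof is correct but takes a genuinely different (and quantitatively stronger) route than the paper's. The paper argues non-constructively: form the graph on all subsets of $[n]$ with an edge between any two sets at symmetric difference $1$ or $2$, note every vertex has degree $n + \binom{n}{2} = n(n+1)/2 < n^2$, and invoke greedy coloring to get a proper coloring with $n^2$ colors. You instead give an explicit coloring via the syndrome map of a Hamming-type code, assigning to each set the $\mathbb{F}_2$-sum of distinct nonzero parity-check columns, which uses only $2^{\lceil \log_2(n+1)\rceil}$ colors — linear in $n$ rather than quadratic. Both easily suffice for the proposition (which only needs $n^2$), but yours is explicit and tighter. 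One small accounting slip: the inequality $2(n+1) \le n^2$ you invoke is false at $n=2$ (it reads $6 \le 4$), so the chain $2^m \le 2(n+1) \le n^2$ does not quite close there. This is easily repaired: since $2^m$ is a power of two and $2^m < 2n+2$, in fact $2^m \le 2n$, and $2n \le n^2$ holds for all $n \ge 2$; alternatively, just check $n=2$ directly, where $2^m = 4 = n^2$. With that patch the argument is complete.
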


\begin{proof} 
Construct the graph whose vertices are subsets of $[n]$ with two vertices joined by an edge if their symmetric difference
has size 1 or 2.
The degree of any vertex is $n(n+1)/2 < n^2$, and so the graph has a proper coloring with
color set $[n^2]$.
\end{proof}

If $\Sigma$ is a finite alphabet and
$\textbf{s} \in \Sigma^k$, a {\em deletion error} is the removal
of some symbol from the string (shrinking the length by 1).   We need the following (which is much weaker than what is possible, but is all we need.)

\begin{proposition}
\label{deletion code}
There is a $k_0$ such that for all integers $k \geq k_0$
there is a code $C_k$ of size at least $2^{k/2}$ over $\{0,1\}^{k}$ that can correct $\lceil 4 \log_2(k) \rceil $  deletion errors.  
\end{proposition}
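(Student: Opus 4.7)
The plan is to construct $C_k$ by a greedy argument based on bounding the size of a ``deletion ball'' around each string. The key observation is that a code $C \subseteq \{0,1\}^k$ corrects $d$ deletions if and only if no two codewords $\mathbf{s}, \mathbf{t}$ share a common subsequence of length $k-d$: indeed, two distinct codewords are confused only if they can both be reduced to the same string by deleting $d$ symbols. So for each string $\mathbf{s}$, define the ``forbidden set'' $B(\mathbf{s})$ to consist of all $\mathbf{t} \in \{0,1\}^k$ that share some common subsequence of length $k-d$ with $\mathbf{s}$. If we can show $|B(\mathbf{s})| < 2^{k/2}$ for all $\mathbf{s}$ when $k$ is large enough, then a standard greedy packing produces at least $2^k / |B(\mathbf{s})| \geq 2^{k/2}$ codewords with the required deletion-correction property.

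Next I would bound $|B(\mathbf{s})|$. There are at most $\binom{k}{d}$ distinct subsequences of $\mathbf{s}$ of length $k-d$ (one for each choice of $d$ positions to delete). For each such subsequence $\mathbf{u}$, the number of strings $\mathbf{t} \in \{0,1\}^k$ having $\mathbf{u}$ as a subsequence is at most $\binom{k}{d} 2^d$, since such a $\mathbf{t}$ is obtained by choosing $d$ insertion positions among the $k$ positions of $\mathbf{t}$ and filling them with bits from $\{0,1\}$. Therefore
\[
|B(\mathbf{s})| \;\leq\; \binom{k}{d}^2 \cdot 2^d.
\]

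Now I substitute $d = \lceil 4 \log_2 k \rceil$. Using the crude bound $\binom{k}{d} \leq k^d$, we have $\binom{k}{d}^2 \leq 2^{2d \log_2 k} \leq 2^{8 \log_2^2 k + O(\log_2 k)}$, and $2^d \leq 2 k^4$. Thus
\[
|B(\mathbf{s})| \;\leq\; 2^{8 \log_2^2 k + O(\log_2 k)}.
\]
Since $8 \log_2^2 k + O(\log_2 k) < k/2$ for all sufficiently large $k$, there exists $k_0$ such that for every $k \geq k_0$ we have $|B(\mathbf{s})| < 2^{k/2}$, and the greedy construction delivers a code of size at least $2^{k/2}$.

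The whole argument is routine counting, so there is no real obstacle; the only thing to be careful about is the factor-of-two accounting in the ball size, making sure the $\binom{k}{d}^2 \cdot 2^d$ bound is loose enough to allow a single clean choice of $d = \lceil 4 \log_2 k \rceil$ (rather than a tighter constant), and picking $k_0$ uniformly large enough to absorb the lower-order terms in the exponent.
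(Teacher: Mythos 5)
Your proof is correct and follows essentially the same argument as the paper: both recast deletion-correction as forbidding pairs of strings that can be reduced (by at most $d$ deletions) to a common subsequence, bound the size of the resulting confusability ball by $\binom{k}{d}^2 2^d$, and extract a code of the desired size via a maximal-independent-set/greedy-packing argument. Your decomposition of the ball into subsequences of $\mathbf{s}$ and then superstrings of each subsequence is a cosmetic rearrangement of the paper's count (choose deletion positions in $\mathbf{x}$, deletion positions in $\mathbf{y}$, and the deleted bits of $\mathbf{y}$), yielding the identical bound.
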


We note that the $k_0$ that is needed for Theorem~\ref{thm:ary} will be the $k_0$ provided by this Proposition.

\begin{proof}
We can choose $C_k$ to be a maximal independent set in the graph
on $\{0,1\}^k$ in which two strings $\textbf{x}$ and $\textbf{y}$ are joined
if there is a string $\textbf{z}$ that can be obtained from each of them by at most $\lceil 4\log_2(k) \rceil $  deletions.
If $\Delta$ is the maximum degree of the graph then any maximal independent set has size at least
$2^k/(\Delta+1)$ and $\Delta$ is at most $\binom{k}{\lceil 4 \log_2(k)\rceil}^22^{\lceil 4\log(k)\rceil}$ (since given $\textbf{x}$ 
each neighbor $\textbf{y}$ of $\textbf{x}$
can be constructed by selecting the subset of $\lceil 4\log_2(k)\rceil$ positions to delete from $\textbf{x}$, the subset
of $\lceil 4\log(k) \rceil$ positions to delete from $\textbf{y}$  and the values of the bits deleted from $y$).  For sufficiently large
$k$ this is at most 
$2^{k/2}-1$.  
\end{proof}

\begin{proof}[Proof of Theorem~\ref{thm:ary}]
Fix $k_0$ according to Proposition~\ref{deletion code} and let $r=2^{1/4}$.   Note that $\log_r(n)=4\log_2(n)$. 
Define the functions $t_j$ as above.
 
For $n \leq k_0$ our protocol will just have
Alice write the same symbol every time and Bob output $[n]$.  So assume $n > k_0$.

We prove the theorem by induction on $j$.    
For the induction we need to strengthen the theorem to say that the constructed protocol $\Pi_j$ works in $j+1$ phases
numbered 0 to $j$
where during phase 0, Alice sees $t_0(n)-t_1(n)$ permutation values
and writes only $2j+1$ and during phase $i \in [1,j-1]$ Alice processes the next $t_{i}(n)-t_{i+1}(n)$ permutation values
and writes only symbols $2(j-i)+1$ and $2(j-i)+2$.    During phase $j$, Alice processes $t_j(n)-1$ permutation values
and writes symbols $1$ and $2$.

The protocol $\Pi_0$ is trivial: the alphabet is $\{1\}$ and $t_0(n)=n$ and $t_1(n)=1$.  
Alice writes only 1's. and Bob
outputs the set $[n]$.  

Now suppose $j>1$ and that $\Pi_{j-1}$ has been defined.  Phase 0 of $\Pi_j$ is prescribed.  Let $t=t_1(n)$ and
let $S=\{s_1 < \ldots < s_t\}$ be the unfilled positions after phase 0.    Alice identifies the
set $S$ with the set $[t]$ by the correspondence $s_j \leftrightarrow j$ and views the
remaining $t$ symbols of $\sigma$ as a permutation $\sigma'$ of $[t]$.   The remaining $j-1$ phases of the $\Pi_j$ 
correspond to the protocol $\Pi_{j-1}$ run on $\sigma'$, so Phase $i$ of $\Pi_j$ corresponds to Phase $i-1$
of $\Pi_{j-1}$ run on $\sigma'$.  For $i\geq 2$, Phase $i$ of $\Pi_j$ is exactly the same as Phase $i-1$ of $\Pi_{j-1}$.
However, Phase 1 of $\Pi_j$ is different from Phase 0 of $\Pi_{j-1}$.  In Phase 0 of $\Pi_{j-1}$ the only symbol
written is $2j-1$ but in Phase 1 of $\Pi_j$ both symbols $2j-1$ and $2j$ are used. 
Since $t \geq k_0$, we can construct $C_t$ as in
Proposition~\ref{deletion code} and by changing the alphabet, we can view $C_t$ as a subset
of $\{2j-1,2j\}^t$.  By the choice of $t=\lceil \log_b n \rceil \geq 4\log_2 n$, we have $n^2 \leq 2^{t/2}$ so we can fix  a 1-1 map $g$ from $[n^2]$ to $C_t$.
Alice computes $g(\chi_n(S))$ where
$\chi_n$ comes from proposition~\ref{symdif}.  This is a string $\textbf{y} \in \{2j-1,2j\}^t$ and during phase 1,
Alice write $y_i$ on location $s_i$.  This completes the specification of $\Pi_j$.

We now turn to Bob's strategy for choosing the set $J$ to output.
Let $A_i$ be the set $\{2i+1,2i+2\}$.  During phase $i$, Alice only writes symbols from $A_{j-i}$
so the number of symbols from $A_{j-i}$ written by Alice is $d_i(n)=t_i(n)-t_{i-1}(n)$ if $i<j$ and
is $d_j(n)=t_j(n)-1$ if $i=j$.    The final symbol $b$ comes from some $A_{j-i}$;
let $i^*$ be the index such that $b \in A_{j-i^*}$.

When receiving Alice's output array Bob can count the number of symbols from 
each $A_{j-i}$.  For all but one $i$ this will be $d_i(n)$, and will be $1+d_{i}(n)$ if and only
$i=i^*$.

If $i^* \neq 0$ then Bob knows the set of positions that Alice wrote $2j+1$ to during phase 0,
and therefore knows the set $S$ of $t_1(n)$ positions that remained unfilled at the end
of phase 0.  Since $b<2j+1$, by identifying symbols $2j$ and $2j-1$, Bob
can interpret the array restricted to $S$ as the output of $\Pi_j$ on a set of size $t_1(n)$.
By induction he can determine a set of size at most $t_{j-1}(t(n))=t_j(n)$ that contains $\sigma_n$.

This leaves the case $i^*=1$
Then Bob sees $n-t_1(n)+1$ positions that contain $2j+1$ one of which is $\sigma_n$.
Let $S'$ be the set of positions that don't have $2j+1$ written on them.  Then Bob knows $S'$.
We argue that Bob can recover the set $S$ of positions not written during Phase 0.
From this, Bob will know $\sigma_n$, since $S-S'=\{\sigma_n\}$.

For those positions $s_i \in S$ that Alice wrote during phase 1,
Alice wrote $y_i$ in position $s_i$ where $\textbf{y}=g(\chi_n(S))$.   The number of symbols written
during phase 1 is $t_1(n)-t_2(n)= t-\lceil 4\log_2(t) \rceil$ (unless $j=1$ in which case $t-1$ symbols were written in phase 1).
Thus the string $\textbf{z}$ seen by Bob (using  symbols from $\{2j-1,2j\}$) is obtained from
$\textbf{y}$ with at most $\lceil 4\log_2(t) \rceil$ symbols deleted.
Since $C_t$ is robust against $\lceil 4\log_2(t) \rceil$ deletions, Bob
can recover $\textbf{y}$ from $\textbf{z}$.    He then knows $g^{-1}(\textbf{y})=\chi_n(S)$.  The choice
of $\chi_n$ implies that $S$ is uniquely determined from $S'$ and $\chi_n(S)$,
so Bob recovers $S$ and therefore $\sigma_n$.  
\end{proof}
%{\bf Remark:} The fact that there is such a low cost protocol for ternary alphabets is somewhat troubling if one hopes to prove question \ref{question:main}. However, we note somewhat optimistically that this ternary protocol can be seen to be a generalization of monotone protocols on binary alphabets (for which we proved a sufficient lower bound in section \ref{sec:lowerbounds}). Thus any counterexample to question 1 will likely have to look quite different from the construction given here.

%Consider protocols on alphabet $[k] = \{0,1,\cdots,k\}$. For each location $i$ let $A_i^{(1)}, A_i^{(2)},\cdots, A_i^{(k)}$ be monotone sets of partial assignments to the remaining locations. A protocol on alphabet $[k]$ is said to be {\em monotone} if when location $i$ arrives, Alice does the following: She finds the largest $j$ for which the present partial assignment $\textbf{v}$ is in the set $A_i^{(j)}$ and writes a $j$ on location $i$. If $\textbf{v}$ is not in $A_i^{(j)}$ for any $j$ she writes a 0 on location $i$.

%It is easy to check that the ternary protocol presented above is monotone in this sense, and that for $k = 1$ the definition coincides with the original definition of monotone protocols for binary alphabets.

   \section{Lower Bounds for Restricted Protocols} \label{sec:lowerbounds}
  In the previous section, two stronger variants of Question~\ref{question:main} turned out to have negative answers, which  may suggest that Question~\ref{question:main} also has a negative answer. 
In this section however, we prove a lower bound which implies that any counterexample to Question~\ref{question:main} will need to look quite different from the two protocols provided in the last section. 
  
An order oblivious protocol can be specified by a sequence of maps $A_1,\cdots,A_n$ where each $A_i$ maps partial assignments on the set $[n]$ to a single bit. When location $\sigma_i$ arrives, the bit Alice writes is  $A_{\sigma_i}(\textbf{v})$. 
%Consider the following partial order on partial assignments in $\{0,1,*\}^n$. 
For partial assignments $\alpha$ and $\beta$, we say that $\beta$ is an {\em extension} of $\alpha$, denoted as $\beta \geq \alpha$, if
$\beta$ is obtained from $\alpha$ by fixing additional variables.
An order oblivious protocol is {\em monotone} if each of the maps $A_1,\cdots,A_n$ are monotone with respect to the extension partial order. That is, if $\beta \geq \alpha$ are partial assignments, then $A_{i}(\beta) \geq A_{i}(\alpha)$ for each $i$. As a remark, when running the protocol there may be assignments that are never written on $\textbf{v}$, however defining each $A_i$ to have domain all partial assignments is still valid and simplifies notation.

  Both the AND-OR protocol described in the introduction and the protocol constructed in Theorem \ref{thm:construction} are examples of monotone protocols. Monotonicity generalizes to protocols on $w$-ary alphabets, and the $w$-ary protocol
of Theorem~\ref{thm:ary} is  monotone (if we order the alphabet
symbols in reverse $2j+1<2j<\cdots < 1$). Our main result in this section is that monotone protocols on binary alphabets have cost at least $\lfloor\sqrt{n} \rfloor$. In particular, Question \ref{question:main} is true for such protocols. For the rest of the paper, all protocols will be on binary alphabets.

  \begin{theorem} \label{thm:monotone}
  All monotone protocols have cost at least $\lfloor\sqrt{n} \rfloor$.
  \end{theorem}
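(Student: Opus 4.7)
The plan is to exhibit a single vertex of $E(\Pi)$ with degree at least $k := \lfloor\sqrt{n}\rfloor$. Call $T \subseteq [n]$ \emph{$0$-reachable} if Alice has an ordering of $T$ on which she writes $0$ at every step, and let $Z$ denote the family of such $T$; write $\textbf{v}^0_T$ for the partial assignment that is $0$ on $T$ and $\ast$ elsewhere. First I would observe, using monotonicity, that $Z$ is \emph{downward closed}: if $\pi$ validly $0$-orders $T$ and $T' \subseteq T$, the sub-ordering of $\pi$ on $T'$ presents Alice at each step with a partial assignment weakly below what she saw in the $T$-execution, and monotonicity preserves the $0$-writes. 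Next I pick a maximum-size element $S$ of $Z$, $|S|=M$, and let $\textbf{v}^*\in\{0,1\}^n$ be $0$ on $S$ and $1$ on $\overline{S}$. Maximality gives $A_{\ell'}(\textbf{v}^0_S)=1$ for every $\ell'\in\overline{S}$, and downward closure gives $A_\ell(\textbf{v}^0_{S\setminus\{\ell\}})=0$ for every $\ell\in S$.

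The next step is to count edges of $\textbf{v}^*$. For each $\ell\in\overline{S}$, the ordering ``$S$ as $0$'s in some valid order, then $\overline{S}\setminus\{\ell\}$ as $1$'s in arbitrary order'' reaches $\textbf{v}^*|_{\neq\ell}$ (the $1$-phase works because monotonicity promotes $A_{\ell''}(\textbf{v}^0_S)=1$ to any extension), so the degree of $\textbf{v}^*$ is at least $|\overline S|$, handling the case $|\overline S|\geq k$. Additional edges come from defining, for $\ell\in S$, the blocker set $B(\ell):=\{\ell'\in\overline S:\ A_{\ell'}(\textbf{v}^0_{S\setminus\{\ell\}})=0\}$; when $B(\ell)=\emptyset$, the ordering ``$S\setminus\{\ell\}$ as $0$'s then $\overline S$ as $1$'s'' delivers an extra edge of $\textbf{v}^*$ at coordinate $\ell$. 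So if at least $k-|\overline S|$ values of $\ell\in S$ have $B(\ell)=\emptyset$ we are done; otherwise $|\overline S|<k$ and strictly more than $n-k$ values $\ell\in S$ have $B(\ell)\neq\emptyset$.

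In that remaining case, choosing $\phi(\ell)\in B(\ell)$ for each such $\ell$ and applying pigeonhole on $\overline S$ (of size less than $k$), some $\ell^*\in\overline S$ blocks a set $L\subseteq S$ of size at least $\lceil(n-k+1)/(k-1)\rceil\geq k-1$; the last inequality is $n\geq k^2-k$, which holds since $k\leq\sqrt n$. I then switch targets to $\textbf{w}:=\textbf{v}^*\oplus\e_{\ell^*}$, which is $0$ on $S\cup\{\ell^*\}$ and $1$ on $\overline S\setminus\{\ell^*\}$. The edge of $\textbf{w}$ at coordinate $\ell^*$ is witnessed by $\textbf{v}^*|_{\neq\ell^*}$, reachable as above. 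For each $\ell\in L$, the required partial state has $0$'s on $S_\ell:=(S\setminus\{\ell\})\cup\{\ell^*\}$ and $1$'s on $\overline S\setminus\{\ell^*\}$: downward closure puts $S\setminus\{\ell\}$ into $Z$; since $\ell^*\in B(\ell)$, Alice legitimately writes $0$ at $\ell^*$ from $\textbf{v}^0_{S\setminus\{\ell\}}$, reaching $\textbf{v}^0_{S_\ell}$; and since $|S_\ell|=M$ is still the maximum $0$-reachable size, $S_\ell$ must be a maximal $0$-set, so $A_{\ell''}(\textbf{v}^0_{S_\ell})=1$ for every $\ell''\in\overline{S_\ell}\supseteq\overline S\setminus\{\ell^*\}$, letting Alice complete the $1$-writes. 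The degree of $\textbf{w}$ is thus at least $1+|L|\geq k$.

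The main conceptual hurdle will be the third-paragraph argument: noticing that the same blocker data that obstructs edges of $\textbf{v}^*$ automatically yields a swap family of alternative maximum $0$-reachable sets whose edges pool at the neighbour $\textbf{w}$. The downward-closure lemma, the maximality of $S_\ell$, and the small-$|\overline S|$ degeneracies (e.g., $|\overline S|=0$ forces $S=[n]$ and the degree of $0^n$ equals $n$ by downward closure alone) are routine; the pigeonhole inequality $n\geq k^2-k$ uses only $k\leq\sqrt n$.
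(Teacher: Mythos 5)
Your proof is correct, and it follows a genuinely different (though parallel) route to the paper's. Both arguments are built around the same central object: a maximum collection of positions that Alice can be induced to fill with $0$'s (your maximum $0$-reachable set $S$; the paper's leading $0$-block obtained by minimizing $w(\sigma)$ lexicographically over all permutations), and the corresponding vertex $\textbf{v}^*$ that is $0$ on that set and $1$ on the complement (the paper's $x$). Where the paper then bumps early elements of $\sigma$ to the end and classifies them by the ``unique $0$-position'' that appears, finally applying pigeonhole on the $t+1$ classes to select a neighbor $y = x \oplus \e_u$ of high degree, you instead introduce the blocker sets $B(\ell) = \{\ell' \in \overline{S} : A_{\ell'}(\textbf{v}^0_{S\setminus\{\ell\}}) = 0\}$, split off the unblocked $\ell$ (which feed additional edges directly into $\textbf{v}^*$), and apply pigeonhole on $\overline{S}$ to select $\ell^*$ and the neighbor $\textbf{w} = \textbf{v}^* \oplus \e_{\ell^*}$. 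The key role of monotonicity is packaged differently in each: the paper proves that bump operations preserve the $0$'s written at unbumped positions (Claim~\ref{claim:0s}), while you prove downward closure of the family $Z$ of $0$-reachable sets and use the maximality of $|S_\ell| = M$ to force $1$-writes on $\overline{S_\ell}$ — which is the observation that lets you pool all the edges at a single neighbor $\textbf{w}$. Your set-theoretic formalism is a bit cleaner: the bound $\lfloor\sqrt n\rfloor$ drops out directly from $n \geq k^2 - k$, the degenerate cases ($\overline{S} = \emptyset$) are explicitly and painlessly handled, and you avoid the paper's slight awkwardness about the class $S_n$ (where $\sigma_{j^*+1}$ is not defined and the witness vertex is actually $x$ itself). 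The paper's permutation formalism is more self-contained about reachability: each witness is literally $\Pi_A(\tau^{(k)})$ for a concrete $\tau^{(k)}$, whereas you must separately check that each of your staged orderings ($0$-phase, then $1$-phase) is consistent with a genuine execution, which you do do. Both are about the same length and difficulty.
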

  \begin{proof}
  Let $\Pi$ be a monotone protocol.   We show that $E(\Pi)$ has a vertex of degree at least $\lfloor \sqrt{n} \rfloor$.
     
    For a permutation $\sigma$ denote by $\bump{\sigma}{k}$ the permutation obtained from $\sigma$ by ``bumping'' the element $k$ to the end of $\sigma$ and maintaining the same relative order for the rest of $\sigma$. For example, $\bump{321654}{1} = 326541$.
    
    We let $w(\sigma)$ denote the array $\Pi_A(\sigma)$  with the entries sorted in $\sigma$ order. In other words, $w(\sigma)$ is the array defined by $w(\sigma)_i = \Pi_A(\sigma)_{\sigma_i}$.   
    \begin{claim} \label{claim:0s}
       Let $\sigma$ be any permutation and let $\tau$ be obtained from $\sigma$ by performing some sequence of bumps on $\sigma$. Suppose that $\tau$ and $m<n$ satisfy the following:
       \begin{itemize}
       \item The elements $\tau_1,\tau_2,\cdots,\tau_m$ were never bumped.
       \item Alice originally wrote a 0 on the locations $\tau_1,\cdots,\tau_m$, that is $\Pi_A(\sigma)_{\tau_i} = 0$ for all $i \leq m$.
       \end{itemize}
     Then $\Pi_A(\tau)_{\tau_i}$ = 0 for all $i \leq m$, i.e., $w(\tau)$ begins with $m$ 0's.
    \end{claim}
    \begin{proof}
    The claim follows easily by induction on $i$. Suppose we have already shown that $w(\tau)$ begins with $(i-1)$ 0's. Let $\textbf{v}(\sigma,k)$ denote the partial assignment written on $\textbf{v}$ just before Alice receives the index $k$ (here the reader should
take care to distinguish this from the partial assignment just before Alice receives $\sigma_k$).
 Consider the partial assignment $\textbf{v}(\tau,\tau_i)$. It follows from the first assumption and the inductive hypothesis that $\textbf{v}(\sigma,\tau_i)$ is an extension of $\textbf{v}(\tau,\tau_i)$. Thus, since Alice originally wrote a 0 on location $\tau_i$, by monotonicity she continues to write a 0 on that location when being streamed $\tau$ (that is $\Pi_A(\tau)_{\tau_i} = 0$).
    \end{proof}

Let $\sigma$ be the permutation such that
$w(\sigma)$ is lexicographically minimum.

\begin{claim} $w(\sigma)$ consists of a block of  0's, followed by a block of 1's,  followed by a single *.
\end{claim}
\begin{proof}
The result is trivial if their are no 1's.  Let $j$ be the location of the first 1, and let $k$ be the last position
in the block of 1's beginning at $j$.  We claim $k=n-1$.  Suppose $k<n-1$.  Then there is a 0 in
position $k+1$.  Let $\tau$ be obtained from $\sigma$  by bumping $\sigma_j,\ldots,\sigma_k$.
By  Claim ~\ref{claim:0s}, $w(\tau)$ begins with $j$ 0's, contradicting the lexicographic minimality of $\sigma$.
\end{proof}

Let $n-t$ be the number of initial 0's in $w(\sigma)$ so the number of 1's is$t-1$.
Let $T=\{\sigma_{n-t+1},\ldots,\sigma_n\}$ and let $x$ be the vector that is 1 in those positions and 0 elsewhere.
For $k$ between 1 and $n$, let $\tau^{(k)}=\bump{\sigma}{k}$, so $\tau^{(\sigma_n)}=\sigma$.  

The vectors of the form $\Pi_A(\phi)$ and $w(\phi)$ have a single *.  For $b \in \{0,1\}$ we write
$\Pi_A(\phi,b)$ and $w(\phi,b)$ for the vectors obtained by replacing the * by $b$.

\begin{claim}
The vertices $\Pi_A(\tau^{(k)},1)$ for $k \in T$ are all equal to $x$. Therefore $x$ belongs to an edge in direction
$k$ for each $k \in T$ and so has degree at least $t$ in $E(\Pi)$.
 \end{claim}

 \begin{proof}
Let $k \in T$.  Clearly $w(\tau^{(k)},1)$ has the first $n-t$ bits 0, and so by the choice of $\sigma$ the remaining bits are 1.
This implies $\Pi_A(\tau^{(k)}$ has 1's in the positions indexed by the last $t$ elements of $\tau^{(k)}$ which is the set $T$.
\end{proof}

To conclude the proof of the theorem we will find an assignment $y$ that has degree at least $(n-t)/(t+1)$ in the graph $E(\Pi)$.

\begin{claim}
 For $k$ among the first $n-t$ elements of $\sigma$,  $w(\tau^{(k)})$
has the first $n-t-1$ bits equal to 0, and has at most one 0 among the next $t$ bits (and last bit *).
\end{claim}
\begin{proof}
 Claim~\ref{claim:0s} immediately implies that the first $n-t-1$ bits of $w(\tau^{(k)})$ are  0. 
Now take all of the locations that are labeled 1 in $\Pi_A(\tau^{(k)})$ and bump them to the end and let this new permutation be $\rho$. 
Claim~\ref{claim:0s} implies that all 0's remain 0. By the lexicographic minimality of $w(\sigma)$, $w(\rho)$ has  at most $n-t$ 0's
which implies that there was at most a single 0 in $\tau^{(k)}$ in positions $n-t+1$ or higher. 
\end{proof}

Now classify each of the first $n-t$ elements of $\sigma$ into sets $S_{n-t},\ldots,S_n$.
Element $k \in S_n$ if $w(\tau^{(k)})$ has $t$ 1's. 
Otherwise $k \in S_j$ where $j$ is the location of the unique 0 of $w(\tau^{(k)})$  in locations $n-t$ to $n-1$.
Choose $j^*$ so that $|S_{j^*}|$ is maximum and let $m=|S_{j*}|$, which is
at least $(n-t)/(t+1)$.   For $k \in S_{j^*}$, let $y^{ki)}=\Pi_A(\tau^{(k)},0)$.   Let $u=\sigma_{j^*+1}$ and let $y$ be
the vector that is 1 on the positions of $T-\{u\}$ and 0 elsewhere.
\begin{claim}
  The assignments $y^{(k)}$  for $k \in S_{j^*}$ are all equal to $y$, and thus $y$ has degree at least $m$ in $E(\Pi)$.
\end{claim}
\begin{proof}
By the definition of the bump operation the sequence of elements appearing in positions $n-t,\ldots,n-1$ in $\tau^{(k)}$ is $\sigma_{n-t+1},\ldots,
\sigma_{n}$ and the element in position $j^*$ of $\tau^{(k)}$ is $\sigma_{j^*+1}=u$.
Thus $y^{(k)}$ is 1 on the elements of $T-\{u\}$ and 0 elsewhere.
 \end{proof}

We thus have a point $x$ of degree at least $t$ and a point $y$ of degree at least $(n-t)/t+1$ in $E(\Pi)$.
This implies that cost of $\Pi$ is at least $\max(t,(n-t)/(t+1)) > \sqrt{n}-1$ and is thus at least $\lfloor \sqrt{n} \rfloor$.
\end{proof}

   As demonstrated by the AND-OR protocol, Theorem \ref{thm:monotone} is tight up to a constant factor. We remark that the monotone protocols we consider here seem to have no general connection to the class of  monotone boolean functions, and our result for monotone protocols seems to be unrelated to the easy and well known fact that the sensitivity conjecture is true for monotone functions.
   
   We conclude this section with a lower bound for a second class of protocols. Although the lower bound is only logarithmic,  proving a logarithmic lower bound for all protocols with a large enough constant would improve on the best known bounds
relating degree and sensitivity.

We need a few definitions. Recall that an edge $e \in \edges$ may be written as an array in $\{0,1,*\}^n$ for which $e_\ell = *$ on exactly one location $\ell$. We call this location $\ell$  the {\em free location} of that edge. We say two edges $e,e'$ {\em collide} if $e_\ell = e'_\ell$ for all $\ell$ that is not a free location of either edge. Equivalently, two edges collide if they share at least one vertex (each edge collides with itself). Both of the lower bounds in this section will follow by finding an edge $e \in E(\Pi)$ that collides with $m$ other edges in $E(\Pi)$. This implies at least one of the vertices in $e$ has degree at least $m/2$ in the graph $E(\Pi)$, which in turn lower bounds the cost of the protocol.

For a permutation $\sigma$, we write $\ell <_\sigma k$ to denote that the element $\ell$ comes before the element $k$ in $\sigma$. 
Let $\precset{\sigma}{k}=\{\ell: \ell <_\sigma k\}$. For example, if $\sigma = 321654$ then $S_1(\sigma) = \{2,3\}$. We say a protocol is {\em assignment oblivious} if the bit written by Alice in location $k$ only depends on the set $\precset{\sigma}{k}$ (and not on the assignment of bits to that set). Such protocols can be described by a collection of $n$ hypergraphs $H_1,H_2,\cdots, H_n$, where each $H_\ell$ is a hypergraph with vertex set $[n] \setminus \{\ell\}$. 
When $k$ arrives, Alice writes a $1$ if and only if the set $\precset{\sigma}{k}$ is in $H_{k}$.
  
    %Although the lower bound is relatively weak, proving the same lower bound for all protocols would imply that $deg(f) \leq 4^{s(f)}$ for all boolean functions which would be the best known bound relating sensitivity and degree.
   \begin{theorem} Every assignment oblivious protocol $\Pi$ has $\costpi \geq \log_2(n)/2$.
   \end{theorem}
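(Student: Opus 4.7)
The plan is to exhibit an edge $e \in E(\Pi)$ that collides with at least $\log_2(n)$ other edges of $E(\Pi)$; as the paragraph preceding the theorem explains, this forces one endpoint of $e$ to have degree at least $\log_2(n)/2$ in $E(\Pi)$, giving the claimed bound on $\costpi$.

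First I would take as the base edge $e_0 = \pisig$ for $\sigma = \sigma^{\mathrm{id}} = (1, 2, \ldots, n)$, so that the free location of $e_0$ is $n$ and its bit at each position $k \in [n-1]$ is $c_k$, where $c_k = 1$ iff $\{1, \ldots, k-1\} \in H_k$. For each candidate free location $\ell \in [n-1]$ I would try to produce a colliding edge via the permutation $\tau^{(\ell)} = (1, \ldots, \ell-1, \ell+1, \ldots, n, \ell)$ that moves $\ell$ to the end and keeps the rest of the order intact. By assignment obliviousness, the bit of $\Pi_A(\tau^{(\ell)})$ at position $k$ depends only on $S_k(\tau^{(\ell)})$; for $k < \ell$ this set equals $\{1, \ldots, k-1\}$, so $\Pi_A(\tau^{(\ell)})$ automatically agrees with $e_0$ on positions $1, \ldots, \ell-1$. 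The two edges then collide precisely when, in addition, for every $k > \ell$ the set $\{1, \ldots, k-1\} \setminus \{\ell\}$ lies in $H_k$ iff $\{1, \ldots, k-1\}$ does, i.e., when none of the $H_k$'s with $k > \ell$ is ``sensitive'' to removing $\ell$ from the canonical set $\{1, \ldots, k-1\}$.

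I would then set up a dichotomy. Either at least $\log_2(n)$ values of $\ell$ satisfy the above insensitivity condition, in which case $e_0$ collides with at least $\log_2(n)$ other edges and we are done directly, or many of the $H_k$'s are sensitive at the canonical chain $\{1, \ldots, k-1\}$. In the latter case I would try to argue that this sensitivity forces $E(\Pi)$ to contain many distinct edges, so that a pigeonhole argument over the $2^n$ vertices of the boolean cube yields a vertex of sufficiently high degree. The hard part will be the case where the $H_k$'s are all highly sensitive at this particular chain: the elementary swap-last-two trick delivers only a single collision, so stretching that single collision into the $\log_2(n)$ collisions needed here requires carefully tracking the combinatorial structure enforced by assignment obliviousness. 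If necessary, adaptively changing the base permutation $\sigma^{\mathrm{id}}$ (choosing a different maximal chain in the boolean lattice) should give enough flexibility to make the argument go through uniformly.
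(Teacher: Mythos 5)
Your set-up is the right framework---both you and the paper aim to exhibit one edge of $E(\Pi)$ that collides with $\Omega(\log n)$ others, and your computation of when $\Pi_A(\tau^{(\ell)})$ collides with $\Pi_A(\sigma^{\mathrm{id}})$ via the assignment-oblivious property is correct. But your argument has a genuine gap: it only handles the ``lucky'' branch of your dichotomy. If, say, each $H_k$ is exactly $\{\{1,\ldots,k-1\}\}$, then $\{1,\ldots,k-1\}\setminus\{\ell\}\notin H_k$ for every $\ell<k$, so \emph{no} value of $\ell$ yields a collision with your fixed base edge, and your first case gives nothing. For the second case you offer only ``I would try to argue'' and an acknowledgment that the combinatorics is hard; that is not an argument, and it is not clear it can be repaired without changing strategy, since insensitivity of the $H_k$'s at one chain does not by itself force any particular vertex to have high degree.

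The paper closes exactly this gap by not committing to a base permutation in advance. It builds $\sigma$ greedily from the tail: set $\sigma_{n-1}=1$; since, with $\sigma_{n-1}$ fixed, the bit written at location $1$ depends only on $\sigma_n$ (assignment obliviousness), partition the candidates for $\sigma_n$ into the two classes according to that bit and keep the larger class $T_1$; choose $\sigma_{n-2}\in T_1$; repeat, halving the candidate set at each step, for $\lceil\log_2 n\rceil$ steps, then let $\sigma_n$ be the surviving element. By construction, for each of the last $\lceil\log_2 n\rceil$ positions, swapping $\sigma_n$ with $\sigma_{n-i}$ leaves every bit at locations $\sigma_{n-1},\ldots,\sigma_{n-j}$ unchanged (both elements lie in $T_j$ for the relevant $j$), so the swapped edge collides with $\Pi_A(\sigma)$. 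This \emph{guarantees} the $\log_2 n$ collisions rather than hoping for them. Your closing remark about ``adaptively changing the base permutation'' gestures toward the right idea, but the concrete mechanism---greedy tail construction with the halving invariant, using $\mathrm{swap}$ rather than $\mathrm{bump}$---is exactly what is missing from your proposal.
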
 \label{thm:oblivious}
   \begin{proof}
   
   Let $\Pi$ be an assignment oblivious protocol.
     
    Given a permutation $\sigma = \sigma_1\sigma_2\cdots\sigma_n$ and $k \in [n]$ we define $\swap{\sigma}{k}$ to be the permutation obtained by swapping the positions of the elements $k$ and $\sigma_n$ within $\sigma$ and keeping every other element in the same place. For example, $\swap{654321}{3} = 654123$.  The lemma will follow by constructing a permutation $\sigma$ such that  that $\pisig$ and  $\Pi_A(\swap{\sigma}{k})$ collide for each $k \in \{\sigma_{n-1},\cdots, \sigma_{n-\ceil{\log_2(n)}}\}$
    
    We build up such a $\sigma$ in a greedy manner. We start with setting $\sigma_{n-1} = 1$.  With $\sigma_{n-1}$ fixed, the bit Alice writes in location $1$ is completely determined by $\sigma_n$ (and does not depend on the values we later choose for $\sigma_1,\cdots,\sigma_{n-2}$). This holds by the assignment oblivious property and because $\precset{\sigma}{1} = \{\ell : \ell \neq 1, \sigma_n\}$. Let $R_1$ be the locations $\ell$ for which setting $\sigma_n = \ell$ results in Alice writing a $1$ in location 1. At least one of $|R_1|, |R_1^c|$ are bigger than $\ceil{(n-1)/2}$, let $T_1$ be that set. Now we fix $\sigma_{n-2}$ to be any element in $T_1$.

Having fixed $\sigma_{n-1}$ and $\sigma_{n-2}$, the bit Alice writes on location $\sigma_{n-2}$ also only depends on the value of $\sigma_n$. Now let $R_2$ be the subset of indices $j$ in $T_1$ such that setting $\sigma_n=j$ would cause  Alice to write a 1 in location $\sigma_{n-2}$. At least one of $|R_2|, |R_2^c|$ are bigger than $\ceil{(|T_1|-1)/2}$, let $T_2 \subseteq T_1$ be that set. This process is iteratively repeated. At step $i$ we set $\sigma_{n-i}$ to be an arbitrary element of $T_{i-1}$. With $\sigma_{n-1},\cdots,\sigma_{n-i}$ now fixed, the value written in location $\sigma_{n-i}$ depends only on the value of $\sigma_n$. The set $R_i$ is defined to be all such values of $\sigma_n$ that result in Alice writing a 1 in location $\sigma_{n-i}$ and $T_i \subseteq T_{i-1}$ is defined to be the larger of $|R_i|$ and $|R_i^c|$. We proceed until the set $T_i$ has only one element in it, in this case we assign $\sigma_n$ to be that element. This process will take at least $\ceil{\log_2(n)}$ steps.  We then assign the remaining elements to $\sigma_1,\cdots,\sigma_{n-i-1}$ in an arbitrary order. 
    
    We now claim that $\pisig$ and $\Pi_A(\swap{\sigma}{k})$ collide for $k = \sigma_{n},\sigma_{n-1},\cdots,\sigma_{n-\ceil{\log_2(n)}}$. 
    
   \begin{claim}
       
    Let $i < \ceil{\log_2(n)}$, and let $k = \sigma_{n-i}$. Then $\pisig_\ell = \Pi_A(\swap{\sigma}{k})_\ell$ for all $\ell \neq k,\sigma_n$.
    \end{claim}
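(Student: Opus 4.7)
The plan is to track, for each $\ell \neq k, \sigma_n$, how the prefix $\precset{\tau}{\ell}$ (with $\tau = \swap{\sigma}{k}$) compares to $\precset{\sigma}{\ell}$, and then to invoke assignment obliviousness together with the nesting of the sets $T_j$ produced by the construction. Since $\tau$ agrees with $\sigma$ at every position except $n-i$ and $n$ (with $\tau_{n-i}=\sigma_n$ and $\tau_n = \sigma_{n-i} = k$), any $\ell \neq k, \sigma_n$ sits in the same position $p$ in both permutations; write $\ell = \sigma_p = \tau_p$ with $p \notin \{n-i, n\}$. By assignment obliviousness the bit Alice writes at $\ell$ is determined by $\{\sigma_1,\ldots,\sigma_{p-1}\}$ (respectively $\{\tau_1,\ldots,\tau_{p-1}\}$), so the claim reduces to showing these two sets yield the same bit.

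First I would dispose of the easy case $p < n-i$: every position $q \in [1,p-1]$ then satisfies $q < n-i$, so $\tau_q=\sigma_q$ and the two prefix sets are literally equal. The interesting case is $n-i < p < n$. In that range the prefix $[1,p-1]$ contains position $n-i$ but not $n$, so $\tau$'s prefix replaces $\sigma_{n-i}=k$ by $\tau_{n-i}=\sigma_n$, giving
\[\precset{\tau}{\ell} = (\precset{\sigma}{\ell}\setminus\{k\})\cup\{\sigma_n\}.\]
Equivalently, $\precset{\tau}{\ell}$ is exactly the prefix set one would see by running $\Pi$ on any permutation whose suffix $\sigma_{n-j},\ldots,\sigma_{n-1}$ (with $j = n-p$) agrees with $\sigma$ and which places $k$ at position $n$.

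This is where I would invoke the construction. At step $j$ (so $0 < j < i$) the set $T_j$ was defined precisely so that every possible value of ``$\sigma_n$'' lying in $T_j$ makes Alice write the same bit at $\sigma_{n-j}=\ell$ once the earlier suffix $\sigma_{n-1},\ldots,\sigma_{n-j}$ is held fixed. So it suffices to check $\sigma_n, k \in T_j$. The final step of the construction chooses $\sigma_n$ as the unique element of a terminal $T$-set, and since this step occurs at depth at least $\lceil \log_2 n \rceil > i > j$, the nesting $\cdots \subseteq T_i \subseteq T_{i-1} \subseteq \cdots \subseteq T_j$ gives $\sigma_n \in T_j$. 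Similarly $k = \sigma_{n-i}$ was selected from $T_{i-1}$, and since $j \le i-1$ the same nesting gives $k \in T_{i-1} \subseteq T_j$. Combined with assignment obliviousness, this yields $\Pi_A(\sigma)_\ell = \Pi_A(\tau)_\ell$.

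The main obstacle I expect is just index bookkeeping --- keeping the position $p$ separated from the ``depth'' $j = n-p$ and remembering that the $T_j$'s shrink as $j$ grows --- together with the conceptual step of reinterpreting $\precset{\tau}{\ell}$ as the prefix set of a fictitious execution in which $k$ takes the role of $\sigma_n$. Without assignment obliviousness this reinterpretation is illegal (the two prefixes really are distinct sets reached by different histories), but with it the bit at $\ell$ depends only on the set, so the construction's guarantee that $T_j$ collapses the bit at $\sigma_{n-j}$ over all choices of $\sigma_n \in T_j$ directly gives the claim.
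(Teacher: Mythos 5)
Your proof is correct and follows essentially the same route as the paper's: split into the cases $\ell <_\sigma k$ (where the prefix sets are literally identical) and $\ell >_\sigma k$ (where assignment obliviousness reduces the bit at $\ell=\sigma_{n-j}$ to a function of the last entry only, and the nesting of the $T$-sets forces $\sigma_n$ and $k$ both into $T_j$). The only differences are cosmetic — you index by position $p$ and then translate to depth $j=n-p$ — but the key ingredients (the identity $\precset{\tau}{\ell} = (\precset{\sigma}{\ell}\setminus\{k\})\cup\{\sigma_n\}$, the $T_j$-collapsing property, and the membership checks $\sigma_n,k\in T_j$ via nesting) are exactly those of the paper.
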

    \begin{proof}
    Let $\sigma' = \swap{\sigma}{k}$. 
 If $\ell <_{\sigma} k$ then  $\precset{\sigma}{\ell} = \precset{\sigma'}{\ell}$ and so Alice writes the same bit
to location $\ell$ under both permutations.
    
Suppose that $\ell >_{\sigma} k$. Let $j$ be such that $\sigma_{n-j} = \ell$. Note that $\sigma_{n-1} = \sigma_{n-1}', \cdots, \sigma_{n-j} = \sigma_{n-j}'$. Recall that holding $\sigma_{n-1},\cdots,\sigma_{n-j}$ fixed, the bit Alice writes at location $\ell$ depends only on the value of $\sigma_n$, and furthermore that bit is the same as for all settings of $\sigma_n \in T_j$. Since both $\sigma_n$ and $\sigma_n' = k$ are in the set $T_j$, it follows that $\pisig_{\ell} = \Pi_A(\sigma')_{\ell}$. 
    \end{proof}
    
    By the above claim, $\sigma$ collides with $\swap{\sigma}{k}$ for at least $\ceil{\log_2(n)}$ values of $k$. Furthermore, at least one of the vertices in $\pisig$ has degree more than $\ceil{\log_2(n)/2}$. This concludes the proof.
    
   \end{proof}
   
   \section{A Protocol with Lower Cost than the AND-OR Protocol} \label{sec:bestknown}

The AND-OR protocol has cost $\lceil \sqrt{n} \rceil$ which  matches our lower bound for
monotone protocols (within 1).  
In this section we show that non-monotone protocols can give at least a small advantage:

\begin{theorem}
\label{upper bound}
For some $\varepsilon>0$ and all sufficiently large $n$ there is a protocol $\Pi$ win
$\costpi \leq (1-\varepsilon)\sqrt{n}$.
\end{theorem}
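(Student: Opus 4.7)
The plan is a bootstrapping one: first exhibit a protocol $\Pi_0$ for $G_{n_0}$, for some specific small $n_0$, whose cost $c_0$ strictly beats $\sqrt{n_0}$, then propagate this improvement to general $n$ via a composition with the AND-OR protocol.

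\emph{Step 1 (the small case).} Construct $\Pi_0$ for $G_{n_0}$ with $c(\Pi_0)=c_0$ satisfying $\rho \defeq c_0/\sqrt{n_0}<1$. By Theorem~\ref{thm:monotone} the protocol cannot be monotone, so the construction must genuinely exploit non-monotonicity. The intended construction starts from an AND-OR-style partition of $[n_0]$ into blocks of size roughly $\sqrt{n_0}$ and modifies the bits Alice writes on block closers in a manner depending on which blocks are still open; for example, Alice uses some of the closer bits to encode a ``pointer'' toward the last-closing block, allowing Bob to eliminate more candidates for $\sigma_n$ than the unmodified AND-OR decoder can. This is where the main difficulty lies: Theorem~\ref{thm:monotone} shows the monotone bound is tight up to a constant, so any genuine improvement must use temporal information about arrival order in a way unavailable to monotone protocols.

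\emph{Step 2 (composition).} Given $\Pi_A$ of cost $\alpha$ for $G_a$ and $\Pi_B$ of cost $\beta$ for $G_b$, assuming the input bit of $\Pi_B$ can be recovered from its final array, define $\Pi_A\circ\Pi_B$ for $G_{ab}$ as follows. Partition $[ab]$ into $a$ super-blocks $B_1,\ldots,B_a$ of size $b$. Within each super-block Alice runs $\Pi_B$ on the locally induced permutation. The input bit fed to $\Pi_B$ for super-block $B_j$ is the bit that $\Pi_A$ would write for $B_j$ viewed as a meta-element in the outer permutation given by the closing order of super-blocks, with outer input bit equal to the true global $b$. Bob first recovers each super-block's inner bit by the recoverability hypothesis, thereby reconstructing the outer array; he applies $\Pi_A$'s decoder to obtain a set of at most $\alpha$ candidate super-blocks possibly containing $\sigma_n$, and within each candidate applies $\Pi_B$'s decoder to obtain a set of size at most $\beta$. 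Their union has size at most $\alpha\beta$. The AND-OR protocol satisfies the recoverability hypothesis, since inner bit $0$ produces an all-zero inner-block, which is absent when the inner bit is $1$.

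\emph{Step 3 (lifting).} For $n$ large, write $n=n_0 m+r$ with $0\le r<n_0$ and apply $\Pi_0\circ\Pi_{\text{AND-OR},m}$; this yields cost at most $c_0\lceil\sqrt{m}\rceil\le\rho\sqrt{n}+c_0$, and a trivial modification handles the $r<n_0$ leftover positions at an additional $O(1)$ cost. Setting $\varepsilon\defeq(1-\rho)/2>0$ yields $c(\Pi)\le(1-\varepsilon)\sqrt{n}$ for all sufficiently large $n$. The composition and lifting are essentially routine; the entire weight of the argument rests on producing the small-case $\Pi_0$ of Step 1.
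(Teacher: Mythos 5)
Your Step~2 composition is sound and your Step~3 lifting is routine, and indeed this recursive bootstrapping is exactly the mechanism Szegedy~\cite{szegedy15} later used (the paper mentions this: from a single good small-case protocol one gets $C(n)=O(n^{r(k)})$ with $r(k)=\log C(k)/\log k$). But as you yourself acknowledge, the entire content of the theorem lives in Step~1, and you do not carry it out. You describe an intention --- ``Alice uses some of the closer bits to encode a `pointer' toward the last-closing block'' --- but you never exhibit a concrete $n_0$, a concrete protocol $\Pi_0$, or a verified bound $c(\Pi_0)<\sqrt{n_0}$. Without that, what you have written is a reduction of the theorem to itself (produce any instance beating $\sqrt{n}$), not a proof. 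Note also that by Theorem~\ref{thm:monotone} any such $\Pi_0$ must be non-monotone, so the existence of a beatable small case is genuinely the hard part and cannot be waved at.

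The paper's own proof goes a different and self-contained route that avoids composition entirely. It builds a single explicit family of protocols directly for all large $n$: Alice zeros out $n-k^2$ positions where $k$ is least with $k^2\geq .8n$, leaving a set $S$ of size $k^2$ split into $k$ blocks of size $k$; on $S$ she runs an AND-OR-like scheme, but instead of writing a fixed bit pattern she writes the codeword $\textbf{x}_S$ of an $(n,k^2)$ \emph{proper code} (a vector supported on $S$, one for each $k^2$-subset, any two at Hamming distance $>2k$), flipping the coordinate of the last element of each block to arrive. Since Bob's received word differs from $\textbf{x}_S$ in at most $k$ positions, the distance property lets him recover $S$, after which the standard AND-OR decoding yields a set of size $k=\lceil\sqrt{.8n}\rceil\approx .894\sqrt n$. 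The random existence argument for the code is short (a union bound over pairs of $k^2$-subsets). This is conceptually closer to your Step~1 intuition --- use the freedom in the non-closer bits to encode side information --- but it makes the idea precise via coding, and it delivers the bound for all large $n$ at once, so no composition or lifting step is needed.

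If you wish to salvage your outline, the missing work is to produce an explicit $(n_0,c_0)$ with $c_0<\sqrt{n_0}$: either by the coding construction above specialized to a fixed $n_0$, or by an exhaustive search as Szegedy did for $n_0=30$, $c_0=5$. One further small caution on Step~2: your ``recoverability hypothesis'' (that the inner input bit can be read off the inner final array) is a genuine restriction, not automatic; you correctly verify it for AND-OR, but it should be stated as a hypothesis of the composition lemma rather than slipped in as an aside.
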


\begin{proof}
 The construction is a variant of the AND-OR protocol.  

An {\em $(n,m)$ proper code} is an indexed family
$\{\textbf{x}_S \in \{0,1\}^n | S \in \binom{[n]}{m}\}$ of vectors such that the support of $\textbf{x}_S$ is a subset of $S$.
We need the following fact: For $n$ sufficiently large and  $n \geq k^2 \geq .8 n$ there is an $(n,k^2)$-proper code in which
any two codewords are at hamming distance at least $2k+1$.  (The routine proof of this is given below.)
Choose the least $k$ such that $k^2 \geq .8n$ and construct such an $(n,k^2)$-proper code. 

Protocol $\Pi$ is as follows: Alice writes 0 in the  first $n-k^2$ locations.   Let $S$ be the set of remaining $k^2$ locations.
View $S$ as split into $k$ blocks where the each successive block consists of the smallest $k$ unassigned indices in $S$.
For the last $k^2$ elements of the permutation, when index $j$ arrives Alice writes $\textbf{x}_{S,j}$
unless $j$ is the final element of its block to arrive, in which case Alice writes $1-\textbf{x}_{S,j}$.   

The word received by Bob differs from $\textbf{x}_{S}$ in at most $k$ places (one for each block) and
so by the distance property of the code, Bob can deduce the set $S$.  If there is a block
of $S$ such that the received vector agrees with $\textbf{x}_S$ on the entire block then Bob
outputs that block (since that block must include $\sigma_n$); otherwise Bob outputs the set of positions (one per block) in which
the received vector disagrees with $\textbf{x}_S$ (which again must include $\sigma_n$).

 Finally we prove the existence of the desired $(n,k^2)$-proper code using a standard random construction.
for each $S \in \binom{[n]}{k^2}$ define $\textbf{x}_S$ to be a random vector supported on $S$.   Call a pair of sets $S,T \in \binom{[n]}{k^2}$ bad
if $\textbf{x}_S$ and $\textbf{x}_T$ differ in at most  $2k+1$ positions.  The number of coordinates on
which $\textbf{x}_S$ and $\textbf{x}_T$ differ is at least the number of coordinates in $S$ on which they differ.
Holding $\textbf{x}_T$ fixed we see that this probability that $S,T$ is bad is at most the probability
of fewer than $2k+1$ heads in $k^2$ coin tosses, which is $2^{-k^2(1-o(1))}$.  Taking a union bound
over all pairs of $k$-sets we get that the probability that there is a bad pair is at most $\binom{n}{.2n}^22^{-.8n(1-o(1))}=o(1)$,
and so with positive probability there are no bad pairs, and so the desired code exists.
\end{proof}

As mentioned in the introduction, after a prelimiary version of this paper appeared,
Mario Szegedy \cite{szegedy15} gave a protocol of cost
$O(n^{.4732})$.

  % \section{Lower Bound for Two Stage Protocols} \label{sec:twostage}
%\cite{aaronsonqcc}

\section{Acknowledgements}  
We thank Ran Raz for helpful discussions.
The first author was supported by NSF grant CCF 083727. 
The second author was supported in part by (FP7/2007-2013)/ERC Consolidator grant LBCAD no.~616787, 
a grant from Neuron Fund for Support of Science, and the project 14-10003S of GA \v{C}R. 
The third author was supported by NSF grants CCF-083727 and CCF-1218711, and the Simons Foundation under award 332622.

%\bibliographystyle{abbrv}
%\bibliography{0803}

% that's all folks
\end{document}